\documentclass[a4paper,USenglish]{scrartcl}
\usepackage{a4wide}
\usepackage[utf8]{inputenc}
\usepackage{amsmath, amsthm, amssymb}
\usepackage{enumitem}
\usepackage{hyperref} 
\newtheorem{thm}{Theorem}[section]

\newtheorem{lem}[thm]{Lemma}

\newtheorem{prop}[thm]{Proposition}

\theoremstyle{remark}

\theoremstyle{definition}
\newtheorem{defi}[thm]{Definition}

\newcommand{\p}{\mathsf{P}}
\newcommand{\sP}{\mathsf{\#P}}
\newcommand{\PP}{\mathsf{PP}}
\newcommand{\FP}{\mathsf{FP}}
\newcommand{\coRP}{\mathsf{coRP}}
\newcommand{\RP}{\mathsf{RP}}
\newcommand{\BPP}{\mathsf{BPP}}
\newcommand{\NP}{\mathsf{NP}}

\newcommand{\coNP}{\mathsf{coNP}}
\newcommand{\CH}{\mathsf{CH}}

\newcommand{\pn}[1]{\textsc{#1}}

\newcommand{\problem}[3]{
\vspace{\topsep}
\noindent\fbox{\begin{minipage}{.8\textwidth}
\pn{#1}\\
 \textbf{Input:} #2\\
\textbf{Problem:} #3
\end{minipage}
}
\vspace{\topsep}
}

\newcommand{\cc}{\mathsf{C}}
\newcommand{\ce}{\mathsf{C_=}}
\newcommand{\cne}{\mathsf{C_{\ne}}}


\newcommand{\lal}{\ensuremath{\alpha}}
\newcommand{\lbe}{\ensuremath{\beta}}
\newcommand{\lga}{\ensuremath{\gamma}}

\newcommand{\lep}{\ensuremath{\epsilon}}

\newcommand{\ltau}{\ensuremath{\tau}}

\newcommand{\abs}[1]{\lvert#1\rvert}
\newcommand{\set}[1]{{ \left\{#1\right\} }}

\title{Monomials in arithmetic circuits: Complete problems in the counting hierarchy}

\begin{document}

\author{Hervé Fournier and Guillaume Malod\\ 
Univ Paris Diderot, Sorbonne Paris Cité,\\ 
Institut de Mathématiques de Jussieu, UMR 7586 CNRS, \\F-75205 Paris, France\\
{\small \texttt{\{fournier,malod\}@math.univ-paris-diderot.fr}}
\and
Stefan Mengel\footnote{Partially supported by DFG grants BU 1371/2-2 and BU 1371/3-1.}\\Institute of Mathematics\\ University of Paderborn\\ D-33098 Paderborn, Germany\\ {\small\texttt{smengel@mail.uni-paderborn.de}} 
}

\maketitle
\begin{abstract}
We consider the complexity of two questions on polynomials given by arithmetic circuits: testing whether a monomial is present and counting the number of monomials. We show that these problems are complete  for  subclasses of the counting hierarchy which had few or no known  natural complete problems before.
 We also study these questions for circuits computing multilinear polynomials.
\end{abstract}

\section{Introduction}

Several recent papers in  arithmetic circuit complexity  refer to a family of classes called the counting hierarchy consisting of the classes $\PP \cup \PP^\PP \cup \PP^{\PP^\PP}\cup \ldots$.  For example, Bürgisser~\cite{Burgisser09} uses these classes  to connect computing integers to computing polynomials, while   Jansen and Santhanam~\cite{JS11} --- building on results by Koiran and Perifel \cite{KP11} ---  use them to derive lower bounds from derandomization. This hierarchy was originally introduced by
Wagner~\cite{wagner86} to classify the complexity of combinatorial problems. Curiously, after Wagner's paper and another by Tor{\'a}n~\cite{toran88}, this original motivation of the counting hierarchy has to the best of our knowledge not been pursued for more than twenty years. Instead, research focused on structural properties and the connection to threshold circuits~\cite{AW93}.  As a result, there are very few natural  complete problems for classes in the counting hierarchy:  for instance, Kwisthout et al.~give in~\cite{kbvdg11}  ``the first problem with a practical application that is shown to be $\FP^{\PP^\PP}$\hspace*{-1mm}-complete''.
 The related class $\ce\p$ appears to have no natural complete problems at all (see~\cite[p.~293]{hemaspaandraO02}). 
It is however possible to define generic complete problems
 by starting with a  $\sP$-complete problem and considering the variant where an instance and a positive integer are provided and the question is to decide whether the number of solutions for this instance is equal to the integer. We consider these problems to be counting problems disguised as decision problems and thus not as natural complete problems for $\ce\p$,
in contrast to the questions studied here.
 Note that  the corresponding logspace counting class $\ce\mathsf{L}$  is known to have interesting complete problems from linear algebra~\cite{ABO99}. 
 
 In this paper we follow Wagner's original idea and show that the counting hierarchy is a helpful tool to classify the complexity of several natural problems on arithmetic circuits by showing complete problems for the classes $\PP^\PP$, $\PP^\NP$ and $\ce\p$.\footnote{Observe that Hemaspaandra and Ogihara~\cite[p.~293]{hemaspaandraO02} state that Mundhenk et al.~\cite{mundhenk00} provide natural complete problems for $\PP^\NP$. This appears to be a typo as Mundhenk et al.~in fact present complete problems not for $\PP^\NP$ but for the class $\NP^\PP$ which indeed appears to  have several interesting complete problems in the AI/planning literature.}
The common setting of these problems is the use of circuits or straight-line programs to represent polynomials. Such a representation can be much more efficient than giving the list of  monomials, but  common operations on polynomials may become more difficult. An important example is the question of determining whether the given polynomial is identically zero. This is easy to do when given a list of monomials. When the polynomial is given as a circuit, the problem, called ACIT for \emph{arithmetic circuit identity testing}, is solvable in $\coRP$ but  is not known to be in $\p$. In fact, derandomizing this problem would imply circuit lower bounds, as shown in~\cite{ki04}.
This question
thus plays a crucial part in complexity and it is natural to consider other problems on polynomials represented as circuits.
In this article we consider mainly two questions. 
 
The first question, called \pn{ZMC} for \emph{zero monomial coefficient}, is to decide whether a given monomial in a circuit has coefficient $0$ or not. This  problem has already been studied
 by Koiran and Perifel~\cite{koiranP07}. They showed that when the  formal degree of the circuit is polynomially bounded the problem is complete for $\p^{\sP}$. Unfortunately this result is not fully convincing, because it is formulated with the rather obscure  notion of strong nondeterministic Turing reductions.  We remedy this situation by proving a completeness result for the class $\ce\p$ under more traditional logarithmic space many-one reductions. 
This provides a  natural complete problem for this class. 
Koiran and Perifel also considered the general case  of \pn{ZMC}, where the formal degree of the circuits is not bounded. They showed that \pn{ZMC} is in 
$\CH$. We  provide  a better upper bound by proving that \pn{ZMC} is in $\coRP^\PP$.
 We finally study  the  case of monotone circuits and show that the problem is then $\coNP$-complete.
  
 The second problem is to count the number of monomials in the polynomial computed by a circuit. This seems like a natural question whose solution should not be too hard, but in the general case it turns out to be $\PP^\PP$-complete, 
and the hardness holds even for weak circuits. We thus obtain another natural complete problem, in this case for the second level of the counting hierarchy.

Finally, we study the two above problems in the case of circuits computing multilinear polynomials.
We show that our first problem becomes equivalent to the fundamental problem  \pn{ACIT} and
that counting monomials becomes $\PP$-complete. 
 
\section{Preliminaries}\label{sec:prem}

\subparagraph*{Complexity classes}

We assume the reader to be familiar with basic concepts of computational complexity theory (see e.g.~\cite{AB09}). All reductions in this paper will be logspace many-one unless stated otherwise.

We consider different counting decision classes in the counting hierarchy~\cite{wagner86}. These classes are defined analogously to the quantifier definition of the polynomial hierarchy but, in addition to the quantifiers $\exists$ and $\forall$, the quantifiers $\cc$, $\ce$ and $\cne$ are used.

\begin{defi}
 Let $\mathcal{C}$ be a complexity class.
\begin{itemize}
 \item $A\in \cc\mathcal{C}$ if and only if there is $B\in \mathcal{C}$, $f\in \mathsf{FP}$ and a polynomial $p$ such that 
\[x \in A \Leftrightarrow \left| \left\{ y\in \{0,1\}^{p(|x|)} \mid (x,y)\in B \right\} \right| \ge f(x),
\]
\item $A\in \ce\mathcal{C}$ if and only if there is $B\in \mathcal{C}$, $f\in \mathsf{FP}$ and a polynomial $p$ such that 
\[x \in A \Leftrightarrow \left| \left\{ y\in \{0,1\}^{p(|x|)} \mid (x,y)\in B \right\} \right| = f(x),
\]
\item $A\in \cne\mathcal{C}$ if and only if there is $B\in \mathcal{C}$, $f\in \mathsf{FP}$ and a polynomial $p$ such that 
\[x \in A \Leftrightarrow \left| \left\{ y\in \{0,1\}^{p(|x|)} \mid (x,y)\in B \right\} \right| \ne f(x).
\]
\end{itemize}
\end{defi}

Observe that $\cne\mathcal{C} = \mathsf{co}\ce\mathcal{C}$ with the usual definition $\mathsf{co}\mathcal{C} = \{ L^c\ |\ L \in \mathcal{C}\}$, where $L^c$ is the complement of $L$. That is why the quantifier $\cne$ is often also written as $\mathsf{coC_=}$, so $\cne \p$ is sometimes called $\mathsf{coC_=P}$. 

The counting hierarchy $\mathsf{CH}$ consists of the languages from all classes that we can get from $\mathsf{P}$ by applying the quantifiers $\exists$, $\forall$, $\cc$, $\ce$ and $\cne$ a constant number of times. Observe that with the definition above $\mathsf{PP} = \cc\mathsf{P}$. Tor{\'a}n~\cite{toran91} proved that this connection between $\mathsf{PP}$ and the counting hierarchy can be extended and that there is a characterization of $\mathsf{CH}$ by oracles similar to that of the polynomial hierarchy. We  state some such characterizations which we will need later on,
followed by other technical lemmas.

\begin{lem}\cite{toran91}\label{lem:quantorsvsoracles}
$\PP^\NP = \cc\exists\p$.
\end{lem}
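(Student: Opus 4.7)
My plan is to establish both inclusions, the first being routine and the second following Tor\'an's original argument.

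For the inclusion $\cc\exists\p \subseteq \PP^\NP$, suppose $A \in \cc\exists\p$ is witnessed by a polynomial-time relation $R$, a polynomial $p$, and $f \in \FP$ such that $x \in A$ iff $|\{y \in \{0,1\}^{p(|x|)} : \exists z\, R(x,y,z)\}| \geq f(x)$. I would build a $\PP$-machine with a $\SAT$ oracle that uniformly guesses $y \in \{0,1\}^{p(|x|)}$, makes a single $\NP$-query to decide $\exists z\, R(x,y,z)$, and accepts iff the oracle answers yes. Padding the computation with an appropriate number of forced-accept and forced-reject branches (the standard trick used in the proof of $\cc\p = \PP$) turns the threshold $f(x)/2^{p(|x|)}$ into strict majority acceptance, placing $A$ in $\PP^\NP$.

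For the converse $\PP^\NP \subseteq \cc\exists\p$ I would follow Tor\'an. Fix a $\PP^\NP$-computation, presented as a polynomial-time oracle machine $M$ with oracle $\mathrm{SAT}$, normalized so that every computation path has length $p(|x|)$ and makes exactly $k = k(|x|)$ adaptive queries $\phi_1(y), \ldots, \phi_k(y, a_{<k})$ of length bounded by some polynomial $q(|x|)$. Let $N(x)$ be the number of accepting paths, so $x \in L$ iff $N(x) \geq 2^{p(|x|)-1} + 1$. The plan is to express $N(x)$, up to a polynomial-time-computable offset, as the size of an $\NP$-set. Encode a computation as a tuple $(y, a, \vec{w})$ where $y$ is a path, $a \in \{0,1\}^k$ is a candidate answer sequence, and $\vec{w} = (w_1, \ldots, w_k)$ supplies a satisfying assignment $w_i$ for each query $\phi_i(y, a_{<i})$ with $a_i = 1$; let $R$ be the polynomial-time predicate asserting that those witnesses are valid and that $M$'s simulation along $(y, a)$ accepts. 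Then $\{(y,a,\vec{w}) : R(x,y,a,\vec{w})\}$ is an $\NP$-verifiable object.

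The main obstacle is that this encoding permits spurious tuples in which $a_i = 0$ is claimed while $\phi_i$ is actually satisfiable, so the naive count strictly dominates $N(x)$. Following Tor\'an, I would enlarge the witness structure by adjoining auxiliary components whose multiplicities are engineered so that the over-count aggregates into an $\FP$-computable quantity $h(x)$. Concretely, inductively on the query depth $k$, one shows that tuples whose first incorrect answer occurs at position $i$ can be paired off or weighted uniformly so that their total contribution depends only on $x$ through a polynomial-time-computable function, yielding a total of exactly $N(x) + h(x)$ tuples. Taking $f(x) = 2^{p(|x|)-1} + 1 + h(x)$ then gives the desired $\cc\exists\p$ characterization. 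The delicate step is carrying out this cancellation so that the shift is exact and genuinely polynomial-time-computable; this inductive bookkeeping over the adaptive query tree is where the proof becomes technical, and is the portion I would transcribe from Tor\'an's paper.
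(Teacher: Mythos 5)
The paper offers no proof of this lemma --- it is quoted directly from Tor\'an --- so there is nothing internal to compare against; I can only assess your argument on its own terms. Your first inclusion $\cc\exists\p \subseteq \PP^\NP$ is fine: one uniform guess of $y$, one oracle query, and the standard padding that converts an $\FP$-computable threshold into strict majority.

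The second inclusion contains a genuine gap. You correctly isolate the obstacle (tuples that falsely claim a query is unsatisfiable), but your proposed fix --- that the spurious tuples can be ``paired off or weighted uniformly so that their total contribution depends only on $x$ through a polynomial-time-computable function $h(x)$'' --- cannot work as stated. The number of tuples whose first incorrect answer is a false ``no'' at position $i$ is essentially the number of paths $y$ (and answer prefixes) for which the $i$-th query is satisfiable; this is a $\sP$-hard quantity that genuinely depends on the instance, not merely on its length, so no uniform bookkeeping can turn the overcount into an $\FP$-computable shift. The actual mechanism has to be different: for instance, one exploits that among all \emph{consistent} answer strings for a fixed path $y$ (those whose ``yes'' claims are all witnessed), the true answer string is the lexicographically greatest --- a false ``no'' can never be detected, but a false ``yes'' can never be certified --- and then one designs a weighting of consistent tuples (realizing subtraction via the complement trick, as in the paper's proof of Lemma~\ref{lem:oraclesPPPP}) so that the dominant contribution comes from the correct string; alternatively one uses Tor\'an's operator calculus directly. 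Since this is precisely the step you declare you would ``transcribe from Tor\'an's paper,'' the proof as written is missing its one nontrivial idea, and the idea you do sketch points in a direction that fails.
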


\begin{lem}
\label{lem:oraclesPPPP}
$\PP^\PP = \cc\cne\p$.
\end{lem}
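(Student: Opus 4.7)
My plan is to prove both inclusions. For the easy direction $\cc\cne\p \subseteq \PP^\PP$, I would observe that $\cne\p = \mathsf{co}\ce\p \subseteq \mathsf{co}\PP = \PP$, using $\ce\p \subseteq \PP$ together with closure of $\PP$ under complementation. Hence if $x \in A \Leftrightarrow |\{y : B(x,y)\}| \geq f(x)$ for some $B \in \cne\p$ and $f \in \FP$, a standard $\PP$ threshold computation with $B$ decided by a $\PP$ oracle puts $A$ in $\PP^\PP$.

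For the harder direction $\PP^\PP \subseteq \cc\cne\p$, I would proceed in two stages. The first stage establishes $\PP^\PP \subseteq \cc\cc\p$, in parallel with Lemma~\ref{lem:quantorsvsoracles}. Given a polynomial-time probabilistic oracle machine $M$ with $\PP$-oracle $O = \{q : |\{w : R(q,w) = 1\}| \geq t(q)\}$, we have $x \in A \Leftrightarrow |\{y : M^O(x,y) = 1\}| \geq 2^{p(|x|)-1}$. I would guess both the random tape $y$ and a vector $\vec{a}$ of hypothetical answers to the $m$ oracle queries; since for each $y$ there is a unique consistent $\vec{a}$, the acceptance count equals the count of pairs $(y, \vec{a})$ such that $\vec{a}$ is consistent with $O$ and $M(x, y, \vec{a}) = 1$. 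Consistency is a conjunction of $m$ threshold conditions, each in $\PP$, and by the Beigel--Reingold--Spielman closure of $\PP$ under intersection this conjunction is again in $\PP = \cc\p$, giving $A \in \cc\cc\p$.

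The second stage is $\cc\cc\p \subseteq \cc\cne\p$ via a rewriting trick. Given $x \in A \Leftrightarrow |\{y' : |S(x,y')| \geq t(x,y')\}| \geq f(x)$ with $S(x,y') = \{z \in \{0,1\}^q : R(x,y',z) = 1\}$, I would let $k$ range over $\{0,1\}^q$ and use
\[
|\{y' : |S(x,y')| \geq t(x,y')\}| \;=\; 2^{|y'|} - |\{(y', k) : k < t(x,y') \text{ and } |S(x,y')| = k\}|,
\]
since for each $y'$ at most one $k$ can witness the right-hand set. Complementing in the $(y',k)$-universe transforms the outer threshold into $|\{(y',k) : k \geq t(x,y') \text{ or } |S(x,y')| \neq k\}| \geq G(x)$ for some $G \in \FP$. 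To put the inner predicate in $\cne\p$, I would define $R^*(x, (y',k), z)$ as $R(x, y', z)$ when $k < t(x,y')$ and as the constantly-true predicate otherwise, and set $g^*(x, (y',k)) = k$. Then $|\{z : R^*\}|$ equals $|S(x,y')|$ or $2^q$, and since $k < 2^q$ the condition $|\{z : R^*\}| \neq k$ precisely captures the desired disjunction.

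The hard part will be the first stage: combining the $m$ threshold consistency checks into a single $\PP$ predicate relies on the non-trivial closure of $\PP$ under intersection. Alternatively, one can follow Tor\'an's direct construction encoding the $m$ thresholds into one counting problem via a careful weighted combination, but this too is somewhat delicate.
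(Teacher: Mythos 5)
Your argument is correct, and its core coincides with the paper's: both proofs reduce everything to showing $\cc\cc\p\subseteq\cc\cne\p$ by the same pair-counting trick (for each outer witness at most one inner parameter can realize equality, so complementing over the product space turns the threshold ``$\ge t$'' into a count of $\cne$-witnesses with a shifted threshold $G(x)=2^{|y'|}(2^q-1)+f(x)$); your device of padding $R^*$ to be constantly true when $k\ge t$ so that $|\{z:R^*\}|=2^q\neq k$ is a clean way to fold the disjunction into a single inequality, where the paper instead reparametrizes by $v\in\{1,\dots,2^{p(|x|)}\}$ and tests $r(x,y)\neq g(x,y)-v$. The genuine divergence is in your first stage: the paper simply cites Tor\'an for $\cc\cc\p=\PP^\PP$ and proves only the $\cne$ conversion, whereas you re-derive $\PP^\PP\subseteq\cc\cc\p$ by guessing oracle-answer vectors and absorbing the $m$ consistency checks into a single $\PP$ predicate. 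That is a valid route, but note that plain closure of $\PP$ under intersection (Beigel--Reingold--Spielman) only handles constantly many conjuncts directly; for $m$ polynomial in $|x|$ you need closure under polynomial-time (conjunctive) truth-table reductions, due to Fortnow and Reingold, or Tor\'an's original construction, as you anticipate. Your easy direction ($\cne\p\subseteq\PP$ via $\mathsf{co}\ce\p\subseteq\mathsf{co}\PP=\PP$, then $\cc\PP\subseteq\PP^\PP$) is equivalent to the paper's route through $\PP^{\sP}$ and binary search. In short: same key lemma and same counting identity at the heart; you pay extra by re-proving Tor\'an's oracle characterization, which buys self-containment at the cost of invoking a nontrivial closure property of $\PP$.
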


\begin{proof}[Proof of  Lemma~\ref{lem:oraclesPPPP}]
 This is not stated in~\cite{toran91}  nor is it a direct consequence, because Tor{\'a}n does not consider the $\cne$-operator. It can be shown with similar techniques and we give a proof for completeness. We show that $\cc\cne \p = \cc\cc \p$, the claim then follows, because $\cc\cc \p = \PP^\PP$ by~\cite{toran91}.

The direction from left to right is straightforward: From the definition we have $\cc\cne\p \subseteq \PP^\sP$. By binary search we have $\PP^\sP = \PP^\PP = \cc\cc\p$.
The other direction is a little more work: Let $L\in \cc\cc\p$. There are $A\in \p$,$f,g \in \FP$ and a polynomial $p$ such that 
\begin{eqnarray}
x\in L& \Leftrightarrow & \text{there are more than $f(x)$ values $y\in \{0,1\}^{p(|x|)}$ such that }\notag \\ && \quad \left| \left\{ z\in \{0,1\}^{p(|x|)} \mid (x,y,z)\in A \right\} \right| \ge g(x,y) \notag \\
&\Leftrightarrow& \text{there are more than $f(x)$ values $y\in \{0,1\}^{p(|x|)}$ such that } \notag \\ && \quad \forall v\in \{1,\ldots , 2^{p(|x|)}\}\colon \left| \left\{ z\in \{0,1\}^{p(|x|)} \mid (x,y,z)\in A \right\} \right| \ne g(x,y) - v \label{eq:1}\\
&\Leftrightarrow& \text{there are more than $2^{p(|x|)} (2^{p(|x|)} - 1)+f(x)$ pairs $(x,v)$ with}\notag \\&&\quad \text{ $y\in \{0,1\}^{p(|x|)}$ and $v\in \{1,\ldots , 2^{p(|x|)}]\}$ such that } \notag \\ && \quad \left| \left\{ z\in \{0,1\}^{p(|x|)} \mid (x,y,z)\in A \right\} \right| \ne g(x,y) - v \label{eq:2}.
\end{eqnarray}

From statement (\ref{eq:2}) we directly get $L\in \cc\cne \p$ and thus the claim. To see the last equivalence we define $r(x,y):=\left|\left\{ z\in \{0,1\}^{p(|x|)} \mid (x,y,z)\in A \right\} \right|$. Fix $x,y$, then obviously $r(x,y)\ne g(x,y)-v$ for all but at most one $v$. It follows that of the pairs $(y,v)$ in the last statement $2^{p(|x|)} (2^{p(|x|)} - 1)$ always lead to inequality. So statement (\ref{eq:2}) boils down to the question how many $y$ there are such that there is no $v$ with $r(x,y)=g(x,y)-v$. We want these to be at least $f(x)$, so we want at least $2^{p(|x|)} (2^{p(|x|)} - 1)+f(x)$ pairs such that $r(x,y)\ne g(x,y)- v$.
\end{proof}

\begin{lem}\cite{green93}\label{lem:green}
 $\exists \cne \p = \cne \p$.
\end{lem}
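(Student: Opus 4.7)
The plan is to prove the nontrivial inclusion $\exists\cne\p \subseteq \cne\p$, since the reverse inclusion is immediate by prepending a dummy existential quantifier. So let $L \in \exists\cne\p$: unfolding both operators, there are a polynomial $p$, a relation $A \in \p$ and a function $f \in \FP$ such that
\[ x \in L \Leftrightarrow \exists y \in \{0,1\}^{p(|x|)}\colon r(x,y) \neq f(x,y), \]
where $r(x,y) := |\{z \in \{0,1\}^{p(|x|)}: (x,y,z) \in A\}|$. By enlarging $p$ if necessary we may assume $0 \le f(x,y) \le 2^{p(|x|)}$.

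The core idea is to kill the existential quantifier with a sum-of-squares gadget that cannot cancel:
\[ \bigl(\exists y\colon r(x,y) \neq f(x,y)\bigr) \Leftrightarrow \sum_y (r(x,y) - f(x,y))^2 \neq 0, \]
because each summand is a nonnegative integer. Expanding the square, this rewrites as
\[ \sum_y r(x,y)^2 \;+\; \sum_y f(x,y)^2 \;\neq\; 2\sum_y r(x,y)\, f(x,y). \]
Each of the three sums is the cardinality of a set with polynomial-time decidable membership: for instance $\sum_y r(x,y)^2 = |\{(y,z_1,z_2): (x,y,z_1),(x,y,z_2) \in A\}|$, $\sum_y f(x,y)^2 = |\{(y,u_1,u_2): u_1,u_2 < f(x,y)\}|$ and $\sum_y r(x,y)f(x,y) = |\{(y,z,u): (x,y,z) \in A,\ u < f(x,y)\}|$, where $z,z_i,u,u_i$ all range over $\{0,1\}^{p(|x|)}$.

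To put this inequality into the syntactic form required by $\cne\p$, I would bring both sides into a common universe $U = \{0,1\}^{3p(|x|)+1}$ of size $N = 2^{3p(|x|)+1}$: let $\alpha(x) = \sum_y r(x,y)^2 + \sum_y f(x,y)^2$ be represented as the cardinality of a set $A_x \subseteq U$ by combining the two terms via an extra selector bit, and $\beta(x) = 2\sum_y r(x,y)f(x,y)$ as the cardinality of $B_x \subseteq U$ using a free bit to absorb the factor $2$. Then the standard complementation trick
\[ \alpha(x) \neq \beta(x) \;\Leftrightarrow\; \alpha(x) + (N - \beta(x)) \neq N \]
finishes the argument: the right-hand side $N$ is trivially computable in $\FP$, and the left-hand side is the number of pairs $(s,w) \in \{0,1\}\times U$ such that either $s=0$ and $w \in A_x$, or $s=1$ and $w \notin B_x$, a polynomial-time predicate in $(x,(s,w))$. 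Hence $L \in \cne\p$.

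The main obstacle is the cancellation problem: the naive attempt ``$\exists y\colon r(x,y) \neq f(x,y)$ iff $\sum_y r(x,y) \neq \sum_y f(x,y)$'' fails, since individual deviations may cancel across different $y$. The squared deviation $(r-f)^2$ is precisely what breaks this symmetry while remaining an integer linear combination of $\sP$-counts; everything else is routine bookkeeping to match the definition of $\cne\p$.
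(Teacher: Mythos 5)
The paper offers no proof of this lemma; it simply cites Green, so there is nothing internal to compare against. Your argument is correct, and it is essentially the standard proof of Green's result: in the literature one usually says that $\cne\p$ consists of the languages $\{x : g(x)\neq 0\}$ for $g$ in $\mathsf{GapP}$, and that $\mathsf{GapP}$ is closed under squaring and exponential summation, which is exactly your step $\exists y\colon r(x,y)\neq f(x,y) \Leftrightarrow \sum_y (r(x,y)-f(x,y))^2 \neq 0$; you have merely unwound these closure properties into explicit cardinality bookkeeping, and the final complementation trick $\alpha\neq\beta \Leftrightarrow \alpha+(N-\beta)\neq N$ correctly lands in the syntactic form of $\cne\p$ used in the preliminaries. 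Two minor points are worth tightening, neither of which affects correctness. First, ``enlarging $p$'' so that $f(x,y)\le 2^{p(|x|)}$ also enlarges the range of $z$ and hence changes $r(x,y)$; either keep separate polynomial lengths for the existential witness $y$, the counting witness $z$, and the threshold witnesses $u$, or pad $z$ with bits forced to $0$. Second, the construction implicitly uses that $f$ takes nonnegative integer values, so that $|\{u : u<f(x,y)\}|=f(x,y)$ and each summand $(r-f)^2$ is a nonnegative integer; with the paper's definition of the $\ce$ and $\cne$ operators this is automatic.
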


\begin{lem}\cite{schonhage79}\label{lem:schonhage}
For a large enough constant $c>0$, it holds that for any integers $n$ and $x$ with $|x| \leqslant 2^{2^n}$
and $x \neq 0$, the number of primes $p$ smaller than $2^{cn}$ such that $x \not\equiv 0 \mod p$
is at least $2^{cn}/cn$.
\end{lem}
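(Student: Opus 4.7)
The plan is to derive the bound from two ingredients and then balance constants: an upper bound on the number of primes $p<2^{cn}$ that can divide $x$, and a lower bound on the total number of primes below $2^{cn}$. The desired count is exactly the difference of these two quantities, and I would choose $c$ at the end so that the difference is at least $2^{cn}/(cn)$.

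For the first ingredient, since $|x|\le 2^{2^n}$ and every prime divisor of $x$ is at least $2$, the product of the distinct prime divisors of $x$ is at most $|x|$, so the number of distinct prime divisors is at most $\log_2|x|\le 2^n$. In particular, at most $2^n$ of the primes $p<2^{cn}$ can satisfy $x\equiv 0\pmod p$. For the second ingredient I would invoke the elementary Chebyshev estimate $\pi(N)\ge \gamma N/\ln N$ for some absolute constant $\gamma>0$, which avoids having to appeal to the full prime number theorem; taking $N=2^{cn}$ gives $\pi(2^{cn})\ge (\gamma/\ln 2)\cdot 2^{cn}/(cn)$.

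Subtracting, the number of ``good'' primes is at least $(\gamma/\ln 2)\cdot 2^{cn}/(cn) - 2^n$, and I want this to dominate $2^{cn}/(cn)$. This holds provided $c$ is chosen so large that simultaneously $\gamma/\ln 2\ge 2$ and $2^{cn}/(cn)\ge 2\cdot 2^n$; the second condition is automatic as soon as $(c-1)n\ge \log_2(2cn)$, which holds for all sufficiently large $c$ uniformly in $n\ge 1$, and the finitely many small cases of $n$ can be absorbed by further enlarging $c$. The only real obstacle is this bookkeeping of constants: no deep number-theoretic input is needed beyond Chebyshev's bound, and the whole statement reduces to the fact that, for large $c$, the handful of ``bad'' primes becomes negligible compared to $\pi(2^{cn})$.
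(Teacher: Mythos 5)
The paper does not actually prove this lemma; it is imported from Sch\"onhage's paper as a black box, so there is no in-paper argument to compare against. Your overall strategy --- bound the number of distinct prime divisors of $x$ by $\log_2|x|\le 2^n$, lower-bound $\pi(2^{cn})$ by Chebyshev, and take the difference --- is the standard and natural route for a statement of this kind.

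There is, however, one step that cannot be carried out as written: you ask for the Chebyshev constant to satisfy $\gamma/\ln 2\ge 2$, i.e.\ $\gamma\ge 2\ln 2\approx 1.386$. No such $\gamma$ exists: by the prime number theorem $\pi(N)\ln N/N\to 1$, and in fact $\pi(N)\ln N/N<1.26$ for all $N\ge 2$, so every admissible $\gamma$ lies strictly below $2\ln 2$. (Note also that this condition does not mention $c$ at all, so it cannot be ``achieved by choosing $c$ large.'') Fortunately this is an over-requirement rather than a real obstruction: since $\pi(2^{cn})\ge (\gamma/\ln 2)\cdot 2^{cn}/(cn)$, all you need is $\gamma>\ln 2\approx 0.693$, which Chebyshev-type bounds do provide (for instance $\pi(N)\ge N/\ln N$ for $N\ge 17$ gives $\gamma=1$). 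The surplus $(\gamma/\ln 2-1)\cdot 2^{cn}/(cn)\approx 0.44\cdot 2^{cn}/(cn)$ must then absorb the at most $2^n$ bad primes, which holds once $(c-1)n\ge \log_2\bigl(cn/(\gamma/\ln 2-1)\bigr)$, and this is satisfied uniformly in $n\ge 1$ for a large enough absolute constant $c$ (the finitely many small $n$ being handled exactly as you say). With that correction to the bookkeeping, your proof is complete.
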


\begin{lem}\cite[p. 81]{hemaspaandraO02}\label{lem:oraclePP}
 For every oracle $X$ we have $\PP^{\BPP^X} = \PP^X$.
\end{lem}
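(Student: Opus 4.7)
The plan is to prove the two inclusions separately. The direction $\PP^X \subseteq \PP^{\BPP^X}$ is immediate since $X \in \BPP^X$ trivially, so no real work is needed there. For the reverse inclusion, take $L \in \PP^{\BPP^X}$ decided by a $\PP$ machine $M$ making at most $q(n)$ queries to an oracle $A \in \BPP^X$, with threshold function $f \in \FP$; that is, using the paper's convention, $M$ accepts $x$ iff at least $f(x)$ of its $T(n) = 2^{p(n)}$ nondeterministic paths are accepting. Let $N$ be a BPP machine with oracle $X$ deciding $A$, and apply standard majority amplification to replace $N$ by a machine $N'$ whose error probability on every input is at most $\epsilon < 1/(4 T(n) q(n))$. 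Since $T(n) q(n) = 2^{\mathrm{poly}(n)}$, only polynomially many repetitions of $N$ are needed, so $N'$ remains a polynomial-time BPP machine with oracle $X$.

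Next, construct a $\PP^X$ machine $M'$ whose nondeterministic choices consist of a guess $y$ for a path of $M$ together with random strings $s = (s_1, \ldots, s_{q(n)})$, one per invocation of $N'$. On branch $(y, s)$, $M'$ simulates $M$ on path $y$, answering the $i$-th query to $A$ by running $N'$ on that query with randomness $s_i$ and oracle $X$, and accepts iff $M$'s simulation accepts. For each $y$, write $a_y \in \{0,1\}$ for whether $M^A$ accepts on path $y$, and $b_y$ for the probability over $s$ that the simulation on path $y$ accepts. A coupling argument based on the event that every invocation of $N'$ gives the correct answer (which fails with probability at most $q(n)\epsilon$ by a union bound) yields $|a_y - b_y| \leq q(n)\epsilon$. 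Letting $r$ be the number of random bits per invocation of $N'$, the number of accepting branches of $M'$ equals $2^{q(n)r} \sum_y b_y$, and this differs from the desired quantity $2^{q(n)r} \sum_y a_y$ by strictly less than $2^{q(n)r}/4$. Accepting with threshold $\lceil (f(x) - 1/2) \cdot 2^{q(n) r}\rceil$ then matches the acceptance behavior of $M$ exactly, placing $L$ in $\cc\p^X = \PP^X$.

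The main obstacle is the adaptivity of $M$'s queries: once $N'$ errs, the subsequent queries posed by the simulation may diverge from those of $M^A$, so one cannot simply compare query outputs in isolation. The coupling argument sidesteps this by comparing only on the event ``all $q(n)$ invocations correct,'' which in turn forces the amplification parameter $\epsilon$ to be smaller than $1/(4T(n)q(n))$ so that the aggregated error $T(n) q(n) \epsilon$ stays strictly below the unit gap inherent to the $\PP$ threshold. Once this calibration is in place, the count over the enlarged path space of $M'$ falls on the correct side of the chosen threshold, which is all that is required.
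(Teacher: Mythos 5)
Your proof is correct and is essentially the standard argument that the paper is citing rather than reproving (the lemma is quoted from Hemaspaandra--Ogihara, p.~81, with no proof given in the paper itself): amplify the $\BPP^X$ machine's error below the reciprocal of (number of $\PP$ paths) times (number of queries), absorb its randomness into the nondeterminism, and observe that the resulting additive perturbation of the accepting-path count is too small to cross the half-integer threshold. The coupling step handling adaptive queries and the calibration $\epsilon < 1/(4T(n)q(n))$ are exactly the points that need care, and you treat both correctly.
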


\subparagraph*{Arithmetic circuits}
An \emph{arithmetic circuit} is a labeled directed acyclic graph (DAG) consisting of vertices or gates with indegree or fanin $0$ or $2$. The gates with fanin $0$ are called input gates and are labeled with $-1$ or variables $X_1, X_2, \ldots, X_n$. The gates with fanin $2$ are called computation gates and are labeled with $\times$ or $+$. We can also consider circuits where computation gates may receive more than two edges, in which case we say that they have \emph{unbounded fanin}.
The polynomial computed by an arithmetic circuit is defined in the obvious way: an input gate computes the value of its label, a computation gate computes the product or the sum of its children's values, respectively. We assume that a circuit has only one sink which we call the output gate. We say that the polynomial computed by the circuit is the polynomial computed by the output gate. 
The \emph{size} of an arithmetic circuit is the number of gates. The \emph{depth} of a circuit is the length of the longest path from an input gate to the output gate in the circuit. A formula is an arithmetic circuit whose underlying graph is a tree. Finally, a circuit or formula is called \emph{monotone} if, instead of the constant $-1$, only the constant $1$ is allowed.

It is common to consider so-called \emph{degree-bounded} arithmetic circuits,  for which the degree of the computed polynomial  is bounded polynomially in the number of gates of the circuit. In our opinion this kind of degree bound has two problems. One is that computing the degree of a polynomial represented by a circuit is suspected to be hard (see~\cite{allenderBKM09,koiranP07,kayalS11}), so problems defined with this degree bound must often be promise problems. The other problem is that the bound on the degree does not bound the size of computed constants, which by iterative squaring can have exponential bitsize. Thus even evaluating circuits on a Turing machine becomes intractable. The paper by Allender et al.~\cite{allenderBKM09} discusses problems that result from this. To avoid all these complications, instead of bounding the degree of the computed polynomial, we choose 
to bound the formal degree of the circuit or equivalently 
to consider multiplicatively disjoint circuits. A circuit is called \emph{multiplicatively disjoint} if,  for each $\times$-gate, its two input   subcircuits are  disjoint from one another. See~\cite{malodP08} for a discussion of degree, formal degree and multiplicative disjointness and how they relate.

\section{Zero monomial coefficient}\label{sec:zmc}

We first consider the question of deciding if a single specified monomial occurs in a polynomial. In this problem and others regarding monomials, a monomial is encoded by giving the variable powers in binary.

\problem{ZMC}{Arithmetic circuit $C$, monomial $m$.}{Decide if $m$ has the coefficient $0$ in the polynomial computed by $C$.}

\begin{thm}\label{th:zmccep}
\pn{ZMC} is $\ce\p$-complete for both multiplicatively disjoint circuits and formulas.
\end{thm}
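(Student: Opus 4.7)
First we sketch $\ce\p$-membership. Let $C$ be a multiplicatively disjoint circuit. The plan is to use the standard parse tree expansion of the polynomial of $C$, namely $C = \sum_T \varepsilon_T \cdot \mathrm{mon}(T)$, where $T$ ranges over the parse trees of $C$ (subtrees obtained by choosing one incoming edge at each $+$-gate and both incoming edges at each $\times$-gate), $\mathrm{mon}(T)$ is the product of the variable-labelled leaves of $T$, and $\varepsilon_T \in \{\pm 1\}$ is the parity of the number of $-1$-labelled leaves of $T$. Multiplicative disjointness guarantees that no gate of $C$ appears twice in any parse tree (any two divergent subtrees of $T$ must separate at some $\times$-node whose corresponding $\times$-gate of $C$ has disjoint subcircuits), so parse trees have size at most $|C|$. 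It follows that, given $(C,m)$, one can check in deterministic polynomial time whether a guessed parse tree yields $m$ with a prescribed sign, so the coefficient of $m$ in $C$ equals $P(m) - N(m)$ for two $\sP$ functions $P$ and $N$ counting, respectively, positively and negatively signed parse trees yielding $m$. The predicate $P(m) = N(m)$ lies in $\ce\p$ by the standard trick: the nondeterministic machine that, on guess $(b,T) \in \{0,1\} \times \{0,1\}^q$ for a suitable polynomial $q$, accepts iff either ``$b=0$ and $T$ encodes a positive parse tree yielding $m$'' or ``$b=1$ and $T$does not encode a negative parse tree yielding $m$'', has exactly $P(m) + (2^q - N(m))$ accepting paths, equal to the $\FP$ value $2^q$ precisely when $P(m) = N(m)$. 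Since every formula is multiplicatively disjoint, this upper bound covers both models.

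For hardness we reduce from the $\ce\p$-complete problem \pn{EQ-\#SAT}: given a 3-CNF formula $\phi$ on $n$ variables with $M$ clauses and an integer $k$, decide whether $\phi$ has exactly $k$ satisfying assignments. Arithmetize each clause $C_j = \ell_{j,1} \vee \ell_{j,2} \vee \ell_{j,3}$ by $\tilde C_j := 1 - \prod_{i=1}^3 (1 - L_{j,i})$, where $L_{j,i}$ is $X_{v(j,i)}$ or $1 - X_{v(j,i)}$ according to the polarity of $\ell_{j,i}$, so that $\tilde C_j$ is the $\{0,1\}$-indicator of ``$C_j$ is satisfied'' on Boolean inputs. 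Put $P := \prod_{j=1}^M \tilde C_j$, a formula of formal degree $D := 3M$. Writing $P(X) = \sum_\beta c_\beta X^\beta$ and using the identity $\sum_{a_i \in \{0,1\}} a_i^{\beta_i} = 2$ if $\beta_i = 0$ and $1$ otherwise, one obtains
\[
\#\phi \;=\; \sum_{a \in \{0,1\}^n} P(a) \;=\; \sum_\beta c_\beta \cdot 2^{n - |\mathrm{supp}(\beta)|}.
\]

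The main technical step is now to realize this weighted sum of coefficients as the coefficient of a single monomial. For each $i$ introduce the gadget formula $g_i(X_i) := X_i^D + \sum_{\ell = 0}^D X_i^\ell$; the coefficient of $X_i^{D - \beta_i}$ in $g_i$ is $2$ when $\beta_i = 0$ and $1$ when $1 \le \beta_i \le D$, the doubled leading term $X_i^D$ being precisely what supplies the factor $2$ for variables absent from the support of $\beta$. Writing $m_0 := X_1^D X_2^D \cdots X_n^D$ and $Q := P \cdot \prod_{i=1}^n g_i(X_i)$, the coefficient of $m_0$ in $Q$ equals $\sum_\beta c_\beta \prod_i 2^{[\beta_i = 0]} = \#\phi$. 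Finally set $Q' := Q - k \cdot m_0$, where the integer $k \le 2^n$ is constructed from the constant $-1$ in $O(n)$ gates by repeated doubling; then the coefficient of $m_0$ in $Q'$ is $\#\phi - k$, which vanishes iff $\#\phi = k$. The whole construction is a polynomial-size formula (hence also a multiplicatively disjoint circuit) and can be produced from $(\phi, k)$ in logspace, completing the reduction.
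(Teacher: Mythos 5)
Your membership argument is essentially the paper's: the same parse-tree expansion of a multiplicatively disjoint circuit, the same decomposition of the coefficient of $m$ as $P(m)-N(m)$ for two $\sP$ functions, and the same padding trick with the extra bit $b$ so that the machine has exactly $P(m)+2^q-N(m)$ accepting paths, which equals the $\FP$ value $2^q$ precisely when the coefficient vanishes. The hardness direction, however, takes a genuinely different route. The paper reduces from the generic $\ce\p$-complete problem $\pn{per}_=$ and gets the coefficient encoding for free from the classical observation that $\pn{per}(A)$ is the coefficient of $Y_1\cdots Y_n$ in $\prod_i\bigl(\sum_j a_{ij}Y_j\bigr)$, so the whole reduction is two lines. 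You instead start from the (equally generic) problem of deciding $\#\phi=k$ for a 3-CNF $\phi$, arithmetize it, and then need the gadget $g_i(X_i)=X_i^D+\sum_{\ell=0}^{D}X_i^\ell$ to convert the weighted sum $\sum_\beta c_\beta\, 2^{\,n-|\mathrm{supp}(\beta)|}$ into the coefficient of the single monomial $\prod_i X_i^D$. This is correct: the doubled top coefficient of $g_i$ supplies exactly the factor $2$ for each variable outside the support of $\beta$, and $\deg P\le D$ guarantees $D-\beta_i\ge 0$, so the coefficient of $m_0$ in $Q$ is indeed $\#\phi$. The one imprecision is that producing the constant $k$ by ``repeated doubling in $O(n)$ gates'' is a circuit bound; for the formula case you should instead write $k=\sum_j k_j 2^j$ with each $2^j$ a product of $j$ copies of $(-1)\times(-1)+(-1)\times(-1)$, which is still of polynomial size. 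What the paper's route buys is brevity; what yours buys is independence from the permanent and a coefficient-padding gadget that is close in spirit to the $\prod_j(1+Z_j+Z_j^2)$ factor the paper itself uses later in the hardness proof for \pn{CountExtMon}.
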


\begin{proof}
Using standard reduction techniques from the $\sP$-completeness of the permanent  (see for example~\cite{AB09}), one define the following generic  $\ce\p$-complete problem, as mentioned in the introduction.

\problem{$\pn{per}_{=}$}{Matrix $A\in \{0,1,-1\}^n$, $d\in \mathbb{N}$.}{Decide if $\pn{per}(A) = d$.}

Therefore, for the hardness of \pn{ZMC} it is sufficient to show a reduction from $\pn{per}_{=}$. 
 We use the following classical argument. On input $A=(a_{ij})$ and $d$ we compute the formula $Q:= \prod_{i=1}^n \left( \sum_{j=1}^n a_{ij} Y_j\right)$.
It is a classical observation by Valiant~\cite{valiant79}\footnote{According to~\cite{vzG87} this observation even goes back to~\cite{hammond79}.} that the monomial $Y_1Y_2\ldots Y_n$ has the coefficient $\pn{per}(A)$. Thus the coefficient of the monomial  $Y_1Y_2\ldots Y_n$ in $Q-d Y_1Y_2\ldots Y_n$ is $0$ if and only if $\pn{per}(A) = d$.

 We now show that \pn{ZMC} for multiplicatively disjoint circuits is in $\ce\p$. 
The proof is based on the use of parse trees, which can be seen as objects tracking the formation of monomials during the computation~\cite{malodP08} and are the algebraic analog of proof trees~\cite{vt89}.
A brief description is given in Appendix~\ref{app:parsetrees}.

  Consider a multiplicatively disjoint circuit $C$ and a monomial $m$, where the input gates of $C$ are labeled either by a variable or by $-1$. A parse tree $T$  contributes  to the monomial $m$ in the output polynomial if, when computing the value  of the tree, we get exactly the powers in $m$; this contribution has coefficient $+1$ if the number of gates labeled $-1$ in $T$ is even and it has coefficient $-1$ if this number is odd.
The coefficient of $m$ is thus equal to $0$ if and only if  the number of trees contributing positively is equal to the number of trees contributing negatively.

Let us represent a parse tree by a boolean word $\bar{\lep}$, by indicating which edges of $C$ appear in the parse tree (the length $N$ of the words is therefore the number of edges in $C$).  Some of these words will not represent a valid parse tree, but this can be tested in polynomial time.
Consider the following language $L$ composed of triples $(C,m,\lep_0\bar{\lep})$ such that:
\begin{enumerate}
\item $\lep_0=0$ and $\bar{\lep}$ encodes a valid parse tree of $C$ which contribute positively to $m$,
\item or $\lep_0=1$ and   $\bar{\lep}$ does not encode a valid parse tree contributing negatively to $m$.
\end{enumerate}
Then the number of $\bar{\lep}$ such that $(C,m,0\bar{\lep})$ belongs to $L$ is the number of parse trees contributing positively to $m$ and the number of $\bar{\lep}$ such that $(C,m,1\bar{\lep})$ belongs to $L$ is equal to $2^N$ minus the number of parse trees contributing negatively to $m$. Thus, the number of $\lep_0\bar{\lep}$ such that $(C,m,\lep_0\bar{\lep})\in L$ is equal to $2^N$ if and only if  the number of trees contributing positively is equal to the number of trees contributing negatively, if and only if  
the coefficient of $m$ is  equal to $0$ in $C$. Because $L$ is in $\p$, \pn{ZMC}  for  multiplicatively disjoint circuits is in $\ce\p$.
\end{proof}

\begin{thm}\label{lem:zmcupper}
\pn{ZMC} belongs to $\coRP^\PP$.
\end{thm}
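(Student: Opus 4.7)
The plan is to combine Lemma~\ref{lem:schonhage} (as an outer $\coRP$ layer) with a $\PP$-oracle computation of $c_m := [m]C$ modulo a random prime. Write $s$ for the size of $C$; an easy induction over the circuit gives $|c_m| \leq 2^{2^{O(s)}}$. Drawing a prime $p$ uniformly at random with $O(s+|m|)$ bits, Lemma~\ref{lem:schonhage} ensures that $c_m \not\equiv 0 \pmod p$ holds with probability bounded below by a positive constant whenever $c_m \ne 0$, while $c_m = 0$ always yields the congruence; standard amplification pushes the one-sided error below $1/2$. It therefore suffices to place the language $\{(C,m,p) : c_m \equiv 0 \pmod p\}$ in $\PP$.

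For the $\PP$ part, I extract the residue $c_m \bmod p$ via Fourier inversion over an extension field $\mathbb{F}_{p^k}$ of polynomial degree. Pick $k = \mathrm{poly}(s, |m|)$ so that $|\mathbb{F}_{p^k}^*|$ far exceeds $\max(2^s, 2^{|m|})$. A random $\omega \in \mathbb{F}_{p^k}^*$ then has order $q > \max(2^s, 2^{|m|})$ with high probability, since only $\max(2^s,2^{|m|})$ elements of $\mathbb{F}_{p^k}^*$ can have smaller order; a representation of $\mathbb{F}_{p^k}$ via a random irreducible polynomial together with such an $\omega$ is obtained by standard randomised subroutines whose coin tosses can be absorbed into the outer $\coRP$ layer. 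Since $q$ exceeds both the degree bound $2^s$ of $C$ in every variable and the exponents $e_i$ appearing in $m$, Fourier inversion yields
\[
q^n\, c_m \;\equiv\; \sum_{\mathbf{j}\in\{0,\ldots,q-1\}^n} \omega^{-\mathbf{e}\cdot\mathbf{j}}\; C(\omega^{j_1},\ldots,\omega^{j_n}) \pmod{p}
\]
inside $\mathbb{F}_{p^k}$. Each summand is computable from $\mathbf{j}$ in polynomial time by evaluating $C$ gate-by-gate inside $\mathbb{F}_{p^k}$, where all intermediate values have polynomial bit-length.

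Fixing an $\mathbb{F}_p$-basis of $\mathbb{F}_{p^k}$, the sum vanishes iff each of its $k$ coordinates vanishes modulo $p$. For each coordinate $c \in \{1,\ldots,k\}$, the component is $\sum_{\mathbf{j}} v_c(\mathbf{j})$ with $v_c(\mathbf{j}) \in \{0,\ldots,p-1\}$ polynomial-time computable; this count is exactly the number of accepting paths of an NP machine that guesses $(\mathbf{j}, r)$ with $r \in \{0,\ldots,p-1\}$ and accepts iff $r < v_c(\mathbf{j})$. A $\PP$ oracle pins this count down exactly by binary search (the count has polynomially many bits), and reduction modulo $p$ recovers the coordinate. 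Invertibility of $q^n$ modulo $p$ then makes the condition $c_m \equiv 0 \pmod p$ equivalent to the vanishing of all $k$ coordinates, completing the placement in $\PP$. The main obstacle is orchestrating the extension-field arithmetic---constructing $\mathbb{F}_{p^k}$ and exhibiting an $\omega$ of sufficiently large order---within the randomness budget; this is handled by standard finite-field algorithmics, and all such randomness is packaged together with the Schönhage random prime in the outer $\coRP$ layer.
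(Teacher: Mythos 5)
Your outer layer is exactly the paper's: draw a random candidate prime $p$ of polynomially many bits, accept if $p$ is composite or if the coefficient vanishes modulo $p$, and invoke Lemma~\ref{lem:schonhage} together with the bound $|c_m|\le 2^{2^{O(s)}}$ to get one-sided error. Where you diverge is the inner step: the paper simply cites the result of Kayal and Saxena that \pn{CoeffSLP} (computing $c_m \bmod p$) lies in $\FP^{\sP}$, whereas you attempt to reprove that fact via Fourier inversion over an extension field. That reproof has a genuine gap.

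The gap is the order of $\omega$. The inversion identity $\sum_{j=0}^{q-1}\omega^{(d-e)j}=q\cdot[d=e]$ (for $d,e<q$) requires $q$ to be the \emph{exact} multiplicative order of $\omega$, and your algorithm must know $q$ explicitly: it fixes the range of the guessed tuple $\mathbf{j}$ and the factor $q^n$ to be inverted modulo $p$. You assert that an $\omega$ of certified large order is produced by ``standard randomised subroutines'', but computing, or even certifying a lower bound on, the order of an element of $\mathbb{F}_{p^k}^*$ is not known to be feasible in randomized polynomial time --- it essentially requires the factorization of $p^k-1$. This is not cosmetic, because it interacts with one-sidedness: if the true order $r$ of $\omega$ is smaller than assumed, the sum aggregates the coefficients of all monomials whose exponent vector is congruent to $\mathbf{e}$ componentwise modulo $r$, which can be nonzero mod $p$ even when $c_m=0$, so an undetected failure would make the algorithm reject a yes-instance. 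The hole is repairable inside $\coRP^{\PP}$: since $\NP\subseteq\PP$, the base machine can use its oracle to obtain the factorization of $p^k-1$ (guess it, verify primality of the factors deterministically) and then compute and check $\mathrm{ord}(\omega)$ --- but as written the proof does not close it. Two smaller slips: the number of elements of $\mathbb{F}_{p^k}^*$ of order at most $B$ is bounded by $\sum_{d\le B}d\le B^2$ rather than $B$, which only costs a larger $k$; and your final step is really a $\p^{\PP}$ binary search rather than a single $\PP$ query, which is harmless since the $\coRP$ base machine may make adaptive oracle queries.
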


\begin{proof}
Given a circuit $C$, a monomial $m$ and a prime number $p$ written in binary,
 \pn{CoeffSLP} is the problem of computing modulo $p$ the coefficient of the monomial $m$ 
in the polynomial computed by $C$. It is shown in~\cite{kayalS11}
that \pn{CoeffSLP} belongs to $\FP^\sP$.

We now describe a randomized algorithm to decide \pn{ZMC}.
Let $c$ be the constant given in Lemma~\ref{lem:schonhage}.
Consider the following algorithm to decide $\pn{ZMC}$ given a circuit $C$ of size $n$
and a monomial $m$, using $\pn{CoeffSLP}$ as an oracle. First choose uniformly at random
an integer $p$ smaller than $2^{cn}$.
If $p$ is not prime, accept. Otherwise, compute the coefficient $a$
of the monomial $m$ in $C$ with the help of the oracle and accept if $a \equiv 0 \mod p$.
Since $|a| \leq 2^{2^n}$, Lemma~\ref{lem:schonhage} ensures that the above is a correct
one-sided error probabilistic algorithm for \pn{ZMC}. 
This yields $\pn{ZMC} \in \coRP^\pn{CoeffSLP}$. Hence $\pn{ZMC} \in \coRP^\PP$.
\end{proof}

\begin{thm}\label{th:zmc-monotone}
\pn{ZMC} is $\coNP$-complete both for monotone formulas
and monotone circuits.
\end{thm}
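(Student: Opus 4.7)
The plan is to establish coNP-membership for monotone circuits (which subsumes monotone formulas) and to establish coNP-hardness already for monotone formulas.

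For the upper bound, the key observation is that in the monotone case every parse tree of $C$ contributes with coefficient $+1$ to the monomial it produces, so the coefficient of a monomial equals the number of parse trees producing it. Hence that coefficient is zero if and only if no parse tree produces $m$. I place the complement of \pn{ZMC} in $\NP$ by encoding a parse tree compactly via integer multiplicities on gates and edges, written in binary. A certificate consists of non-negative integers $u(g)$ for each gate $g$ and $f(h,g)$ for each edge $(h,g)$ of $C$, subject to the conservation constraints: $u(\text{output})=1$; for each $+$-gate $g$ with inputs $g_1,g_2$, $u(g)=f(g_1,g)+f(g_2,g)$; for each $\times$-gate $g$ with inputs $g_1,g_2$, $u(g)=f(g_1,g)=f(g_2,g)$; and for each non-output gate $h$, $u(h)=\sum_{g:\, h\to g}f(h,g)$. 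The verifier also checks that, for each variable $X_i$, the exponent of $X_i$ in $m$ equals $\sum_{g\text{ labeled }X_i}u(g)$. A straightforward top-down realization argument shows that any assignment $(u,f)$ satisfying the conservation constraints is induced by at least one genuine parse tree (one peels off the root, distributes the $u(g)$ copies of each gate among its outgoing edges according to $f$, and recurses). Since any parse tree producing $m$ has size bounded by $2^{\operatorname{poly}(|C|,|m|)}$, the multiplicities fit in polynomial many bits, and all arithmetic checks are polynomial. Thus the complement of \pn{ZMC} is in $\NP$, so $\pn{ZMC}\in\coNP$.

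For the lower bound, I reduce the complement of \pn{Exact Cover} to \pn{ZMC} for monotone formulas in logarithmic space. Given an instance with universe $U$ and sets $S_1,\dots,S_t\subseteq U$, construct the monotone formula
\[
C = \prod_{i=1}^{t}\left(1+\prod_{u\in S_i}X_u\right)
\]
and the monomial $m^* = \prod_{u\in U}X_u$. Expanding,
\[
C = \sum_{\mathcal{I}\subseteq[t]}\prod_{u\in U}X_u^{c_\mathcal{I}(u)},\qquad c_\mathcal{I}(u)=\abs{\set{i\in\mathcal{I} : u\in S_i}},
\]
so $m^*$ has non-zero coefficient in $C$ iff some $\mathcal{I}$ satisfies $c_\mathcal{I}(u)=1$ for every $u\in U$, i.e., iff $(U,\set{S_i})$ admits an exact cover. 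Hence $(C,m^*)\in\pn{ZMC}$ iff the given instance has no exact cover. Since \pn{Exact Cover} is $\NP$-complete, this yields $\coNP$-hardness for monotone formulas, and a fortiori for monotone circuits.

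The main technical obstacle is the upper bound for monotone circuits, as opposed to formulas: a parse tree of a circuit can have exponentially many nodes because of shared subcircuits, so it cannot simply be guessed directly as is done for formulas in Theorem~\ref{th:zmccep}. The succinct flow-style description above handles this, but requires justifying that consistent multiplicity data really does correspond to an actual parse tree, which is the essential step. For monotone formulas the certificate collapses to (an indicator vector of) a subformula, recovering the immediate $\NP$-style argument.
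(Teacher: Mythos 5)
Your proposal is correct and follows essentially the same route as the paper: the hardness part is the same Exact Cover reduction via $\prod_i(1+\prod_{u\in S_i}X_u)$, and your flow-style certificate of gate/edge multiplicities with conservation constraints is exactly the paper's notion of parse tree types (with the realizability of a consistent type by an actual parse tree being the same key step, which the paper relegates to its appendix).
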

\begin{proof}
For hardness, we reduce the $\NP$-complete problem \pn{Exact-3-Cover}~\cite{gareyJ79} to
the complement of \pn{ZMC} on monotone formulas, as done in~\cite[Chapter 3]{strozecki:phd} (we reproduce the argument here for completeness).

\problem{Exact-3-Cover}{Integer $n$ and $C_1,\ldots,C_m$ some $3$-subsets
of $\{1,\ldots,n\}$.}{Decide if there exists $I \subseteq \{1,\ldots,m\}$
such that $\{C_i\ |\ i \in I\}$ is a partition of $\{1,\ldots,n\}$.}

Consider the formula $F=\prod_{i=1}^m (1+\prod_{j \in C_i} X_j)$. 
The monotone
formula $F$ has the monomial $\prod_{i=1}^n X_i$ if and only if $(n,C_1,\ldots,C_m)$
is a positive instance of \pn{Exact-3-Cover}.
 
 Let us now show that \pn{ZMC} for monotone circuits is in $\coNP$. This proof will use the notion of \emph{parse tree types}, which are inspired by the generic polynomial introduced in~\cite{mal07} to compute coefficient functions. We give here a sketch of the argument, more details are provided in
Appendix~\ref{app:parsetrees}.
The parse trees of a circuit which is not necessarily multiplicatively disjoint may be of a much bigger size than the circuit itself, because they can be seen as parse trees of the formula associated to the circuit and obtained by duplicating gates and edges. Define the \emph{type} of a parse tree by giving, for each edge in the original circuit, the number of copies of this edge in the parse tree. 
 There can be many different parse trees for a given parse tree type
 but
they will all contribute to the same monomial, which is easy to obtain from the type: the power of a variable in the monomial is  the sum, taken over all input gates labeled by this variable, of the number of edges leaving from this gate. In the case of a monotone circuit, computing the exact number of parse trees for a given type is thus not necessary, as a monomial will have a non-zero coefficient if and only if  there exists a valid parse tree type producing this monomial. 

 Parse tree types, much like parse trees in the proof of Theorem~\ref{th:zmccep}, can be represented by Boolean tuples which must satisfy some easy-to-check conditions to be valid. Thus the coefficient of a monomial is $0$ if and only if  there are no valid parse tree types producing this monomial, which is a $\coNP$ condition.
\end{proof}

\section{Counting monomials}\label{sec:monslp}

We now turn to the problem of counting the monomials of a polynomial represented by a circuit.

\problem{CountMon}{Arithmetic circuit $C$, $d\in \mathbb{N}$.}{Decide if the polynomial computed by $C$ has at least $d$ monomials.}

To study the complexity of \pn{CountMon} we will look at what we call extending polynomials.
Given two monomials $M$ and $m$, we say that $M$ is $m$-extending if $M = m m'$ and $m$ and $m'$ have no common variable. We start by studying the problem of deciding the existence of an extending monomial.

\problem{ExistExtMon}{Arithmetic circuit $C$, monomial $m$.}{Decide if the polynomial computed by $C$  contains an $m$-extending monomial.}

\begin{prop}\label{prop:existextending}
$\pn{ExistExtMon}$ is in $\RP^\PP$. For multiplicatively disjoint circuits it is $\cne \p$-complete.
\end{prop}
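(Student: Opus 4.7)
I would tackle this in three stages: the $\RP^\PP$ upper bound for arbitrary circuits, $\cne\p$-hardness for multiplicatively disjoint circuits via reduction from the complement of $\pn{ZMC}$, and a matching $\cne\p$ upper bound in the multiplicatively disjoint case via Green's theorem. For the $\RP^\PP$ part, split the variables of $C$ into $X$ (those occurring in $m = X^a$) and $Y$ (the rest), and write the polynomial computed by $C$ as $\sum_\alpha X^\alpha Q_\alpha(Y)$. Then $(C,m)$ is a yes-instance iff the coefficient polynomial $Q_a(Y)$ is not identically zero. Mimicking the randomized prime-selection argument from Theorem~\ref{lem:zmcupper} but adding a Schwartz--Zippel layer on top, I would sample an integer vector $r$ whose coordinates are uniform in a range whose size exceeds the formal degree of $C$ (polynomial bit-length suffices), sample a random prime $p$ of polynomial bit-length, query the $\pn{CoeffSLP}$ oracle on the circuit $C(X,r)$ and the monomial $X^a$ modulo $p$, and accept iff the returned value is nonzero. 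No-instances are rejected with certainty, while yes-instances satisfy $Q_a(r) \neq 0$ with high probability by Schwartz--Zippel and then $Q_a(r)\not\equiv 0 \bmod p$ with constant probability by Lemma~\ref{lem:schonhage}; since $\pn{CoeffSLP}\in\FP^\PP$, this places $\pn{ExistExtMon}$ in $\RP^\PP$.

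For $\cne\p$-hardness on multiplicatively disjoint circuits, I would reduce from the complement of $\pn{ZMC}$, which is $\cne\p$-hard by Theorem~\ref{th:zmccep}. Given an instance $(C,m)$, construct $C'$ by replacing every input gate whose label is a variable not appearing in $m$ by the constant $0$; this is logspace-computable and preserves multiplicative disjointness, since the circuit topology is unchanged. Since $C'$ involves only variables occurring in $m$, the unique candidate $m$-extending monomial in $C'$ is $m$ itself, and its coefficient in $C'$ equals the scalar coefficient of $m$ in $C$. Hence $(C,m)\in\text{co-}\pn{ZMC}$ iff $(C',m)\in\pn{ExistExtMon}$.

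For the matching $\cne\p$ upper bound, recall that multiplicative disjointness bounds the formal degree of $C$ polynomially, so every candidate $m$-extending monomial is specified by an exponent vector $\beta$ on the $Y$-variables of polynomial bit-length. Therefore $(C,m) \in \pn{ExistExtMon}$ iff there exists such a $\beta$ making the coefficient of $m\cdot Y^\beta$ in $C$ nonzero, and this inner predicate is an instance of co-$\pn{ZMC}$ on a multiplicatively disjoint circuit, hence in $\cne\p$ by Theorem~\ref{th:zmccep}. By Green's theorem (Lemma~\ref{lem:green}) we conclude $\pn{ExistExtMon}\in \exists\,\cne\p = \cne\p$. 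The main subtlety is in the $\RP^\PP$ bound: one has to justify that a single random prime of polynomial bit-length still catches non-vanishing of the integer $Q_a(r)$, which may have doubly-exponential magnitude---Lemma~\ref{lem:schonhage} is calibrated precisely for this---and to verify that $\RP$ with oracle access to a function in $\FP^\PP$ collapses to $\RP^\PP$, since each oracle query reduces to polynomial time with a $\PP$ oracle.
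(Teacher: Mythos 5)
Your proposal is correct and follows essentially the same route as the paper: the $\RP^\PP$ bound reduces to testing non-vanishing of the coefficient polynomial of $m$ via Schwartz--Zippel, a random prime (Lemma~\ref{lem:schonhage}) and the $\pn{CoeffSLP}$ oracle; the multiplicatively disjoint upper bound guesses an extension and queries co-$\pn{ZMC}$, then applies Lemma~\ref{lem:green}; and the hardness rests on the observation that a circuit involving only the variables of $m$ has $m$ itself as the sole candidate extension. The only cosmetic differences are that you substitute random values directly into the non-$m$ variables rather than multiplying by the auxiliary circuit $\prod_i (1+Y_iX_i)^d$, and you reduce from co-$\pn{ZMC}$ generically rather than from $\pn{per}_{\ne}$, both of which are fine.
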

\begin{proof}
 We first show the first upper bound. So let $(C, m)$ be an input for $\pn{ExistExtMon}$ where $C$ is a circuit in the variables $X_1,\ldots, X_n$. Without loss of generality, suppose that $X_1,\ldots, X_r$ are the variables appearing in $m$. Let  $d=2^{|C|}$: $d$ is a bound on the degree of the polynomial computed by $C$. We define 
  $C' = \prod_{i=r+1}^n (1+ Y_i X_i)^d$ for new variables $Y_i$. 
We have that $C$ has an $m$-extending monomial if and only if in the product $CC'$ the  polynomial $P(Y_{r+1}, \ldots, Y_n)$, which is the coefficient of $m\prod_{i=r+1}^n X_i^d$, is not identically $0$. Observe that $P$ is not given explicitly but can be evaluated modulo a random prime with an oracle for $\pn{CoeffSLP}$. Thus it can be checked  if $P$ is identically $0$ with the classical Schwartz-Zippel-DeMillo-Lipton lemma (see for example~\cite{AB09}).
It follows that $\pn{ExistExtMon}\in \RP^\PP$.

The upper bound in the multiplicatively disjoint setting is easier: we can guess an $m$-extending monomial $M$ and then output the answer of an oracle for the complement of $\pn{ZMC}$, to check whether $M$ appears in the computed polynomial. This establishes containment in $\exists \cne \p$ which by Lemma \ref{lem:green} is $\cne \p$.

For hardness we reduce to $\pn{ExistExtMon}$ the $\cne\p$-complete problem $\pn{per}_{\ne}$,  i.e., the complement of the $\pn{per}_{=}$ problem introduced for the proof of Theorem~\ref{th:zmccep}.
We use essentially the same reduction constructing a circuit $Q:= \prod_{i=1}^n \left( \sum_{j=1}^n a_{ij} Y_j\right)$.
Observe that the only potential extension of $m:=Y_1Y_2\ldots Y_n$ is $m$ itself and has the coefficient $\pn{per}(A)$. Thus $Q-d Y_1Y_2\ldots Y_n$ has an $m$-extension if and only if $\pn{per}(A) \ne d$.
\end{proof}

\problem{CountExtMon}{Arithmetic circuit $C$, $d\in \mathbb{N}$, monomial $m$.}{Decide if the polynomial computed by $C$ has at least $d$ $m$-extending monomials.}

\begin{prop}\label{prop:toextending}
 $\pn{CountExtMon}$ is $\PP^\PP$-complete.
\end{prop}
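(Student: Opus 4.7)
The plan is to prove both $\pn{CountExtMon} \in \PP^\PP$ (membership) and $\PP^\PP$-hardness.

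For membership, I would invoke Lemma~\ref{lem:oraclesPPPP} to work with the equivalent characterization $\PP^\PP = \cc\cne\p$. Since $C$ is multiplicatively disjoint, its formal degree is bounded by $2^{|C|}$, so an $m$-extending monomial $m'$ is encoded by an exponent vector of polynomial bit length. The predicate ``$m'$ is variable-disjoint from $m$ and $mm'$ has nonzero coefficient in $C$'' is, after a trivial syntactic check, the complement of $\pn{ZMC}$ applied to $(C, mm')$, and so by Theorem~\ref{th:zmccep} lies in $\cne\p$. Counting such witnesses $m'$ with threshold $d$ via the outer $\cc$ quantifier then puts $\pn{CountExtMon}$ in $\cc\cne\p = \PP^\PP$.

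For hardness, building on the $\cne\p$-completeness of $\pn{per}_{\ne}$ used in Proposition~\ref{prop:existextending} together with $\PP^\PP = \cc\cne\p$, I would reduce from the natural $\PP^\PP$-complete counting-permanent problem: given a polynomial-time machine that on input $(x, y)$ outputs a $\{-1,0,1\}$-matrix $A_{x,y}$ and an integer $d_{x,y}$, a threshold $D$, and an input $x$, decide whether $|\{y \in \{0,1\}^n : \pn{per}(A_{x,y}) \ne d_{x,y}\}| \ge D$. Given such an instance, I would introduce selector variables $Y_1, \ldots, Y_n, Y_1', \ldots, Y_n'$ and permanent variables $Z_1, \ldots, Z_k$, attach to each Boolean $y$ the selector monomial $m_y := \prod_i Y_i^{y_i}(Y_i')^{1-y_i}$, and construct a polynomial-size circuit $C$ whose computed polynomial equals
\[
\hat C(Y, Y', Z) \;=\; \sum_{y \in \{0,1\}^n} m_y \cdot \left( \prod_i \sum_j (A_{x,y})_{ij} Z_j \;-\; d_{x,y} \, Z_1 \cdots Z_k \right),
\]
paired with $m := Z_1 \cdots Z_k$ and threshold $d := D$. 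Because the $m_y$ are pairwise distinct monomials in the $Y, Y'$-variables, every monomial of $\hat C$ has the form $m_y Z^\gamma$ for a unique $y$, and Valiant's identity yields that the coefficient of $m \cdot m_y$ is exactly $\pn{per}(A_{x,y}) - d_{x,y}$. Hence the $m$-extending monomials of $\hat C$ are in bijection with $\{y : \pn{per}(A_{x,y}) \ne d_{x,y}\}$, which finishes the reduction.

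The main obstacle is to realize $\hat C$ as a genuinely polynomial-size arithmetic circuit, since the displayed sum has $2^n$ summands. The plan is to arithmetize the polynomial-time computation of $(A_{x,y}, d_{x,y})$ into polynomial-size arithmetic sub-circuits $\tilde a_{ij}(Y)$ and $\tilde d(Y)$ that agree with the entries of $A_{x,y}$ and with $d_{x,y}$ on Boolean $y$, to assemble $\tilde Q(Y, Z) := \prod_i \sum_j \tilde a_{ij}(Y) Z_j$ and $\tilde R(Y, Z) := \tilde Q(Y, Z) - \tilde d(Y) \, Z_1 \cdots Z_k$, and finally to use the auxiliary $Y'$-variables to force the $Y$-exponent pattern of each surviving monomial into the selector shape $Y_i^{y_i}(Y_i')^{1-y_i}$. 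Arranging this last step so as to keep the total circuit polynomial in size while ensuring that no spurious $m$-extending monomials appear---in particular, making sure that high $Y$-degree contributions of $\tilde R$ are collapsed onto the intended selector monomials---is where the bulk of the technical work lies.
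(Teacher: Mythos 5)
There are two genuine gaps here. First, your membership argument silently assumes the input circuit is multiplicatively disjoint, but \pn{CountExtMon} is stated for arbitrary arithmetic circuits, and Theorem~\ref{th:zmccep} places \pn{ZMC} in $\ce\p$ only for multiplicatively disjoint circuits and formulas; for general circuits the only upper bound available is $\pn{ZMC}\in\coRP^\PP$ (Theorem~\ref{lem:zmcupper}). So your guess-and-check predicate does not land in $\cne\p$ and the $\cc\cne\p$ characterization cannot be invoked as you do. The paper instead observes $\pn{CountExtMon}\in\PP^{\pn{ZMC}}\subseteq\PP^{\coRP^\PP}$ and collapses the oracle using Lemma~\ref{lem:oraclePP}.

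Second, and more seriously, the hardness reduction is not completed, and the route you sketch cannot be completed as stated. The polynomial $\hat C=\sum_y m_y(\cdots)$ is an exponential sum whose coefficients are determined by an arbitrary polynomial-time computation of $(A_{x,y},d_{x,y})$; by Valiant's criterion such a polynomial lies in \vnp, and there is no reason to expect it to have polynomial-size arithmetic circuits --- producing one generically would essentially require $\vp=\vnp$. Arithmetizing the entries into $\tilde a_{ij}(Y)$ and forming $\prod_i\sum_j\tilde a_{ij}(Y)Z_j$ does not rescue this: the product mixes the $Y$-monomials contributed by different rows, so the coefficient of $m_yZ_1\cdots Z_k$ picks up cross-terms from many assignments and is no longer $\pn{per}(A_{x,y})-d_{x,y}$; ``collapsing'' those contributions onto selector monomials is precisely the kind of cancellation control one cannot perform for a generic computation. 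The missing idea is to reduce from a complete problem with \emph{local} structure. The paper uses $\cc\cne\pn{3SAT}$: each literal is assigned a fixed monomial recording, via clause variables $Z_j$, which clauses it satisfies, so the sum over all $2^{2n}$ assignments factors as the product $\prod_i(I(x_i)+I(\lnot x_i))\prod_i(I(y_i)+I(\lnot y_i))$ of linearly many binomials; multiplying by $\prod_j(1+Z_j+Z_j^2)$ and subtracting $\ell\prod_i(1+X_i)\prod_jZ_j^3$ then makes the $(\prod_jZ_j^3)$-extending monomials count exactly the assignments $\bar\alpha$ whose number of satisfying $\bar\beta$ differs from $\ell$. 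Your bijection-with-selector-monomials idea is the right intuition, but without a factorization of this kind the circuit you need cannot be built.
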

\begin{proof}
Clearly $\pn{CountExtMon}$ belongs to $\PP^{\pn{ZMC}}$ and thus with Theorem~\ref{lem:zmcupper} it is in $\PP^{\coRP^\PP}$. Using Lemma~\ref{lem:oraclePP} we get  membership in $ \PP^\PP$.
To show hardness, we reduce the canonical $\cc\cne \p$-complete problem $\cc\cne\pn{3SAT}$ to $\pn{CountExtMon}$. With Lemma~\ref{lem:oraclesPPPP} the hardness for $\PP^\PP$ follows.

\problem{$\cc\cne$3SAT}{$\pn{3SAT}$-formula $F(\bar{x}, \bar{y})$, $k,\ell \in \mathbb{N}$.}{Decide if there are at least $k$ assignments to $\bar{x}$ such that there are not exactly $\ell$ assignments to $\bar{y}$ such that $F$ is satisfied.}

Let $(F(\bar{x}, \bar{y}), k,\ell)$ be an instance for $\cc\cne\pn{3SAT}$. Without loss of generality we may assume that $\bar{x}= (x_1, \ldots, x_n)$ and $\bar{y} = (y_1, \ldots, y_n)$ and that no clause contains a variable in both negated and unnegated form. Let $\Gamma_1, \ldots, \Gamma_c$ be the clauses of $F$.

For each literal $u$ of the variables in $\bar{x}$ and $\bar{y}$ we define a monomial $I(u)$ in the variables $X_1, \ldots , X_n, Z_1, \ldots , Z_c$ in the following way:
 \begin{align*}
  I(x_i) &= X_i \prod_{\{j\ |\ x_i\in \Gamma_j\}} Z_j, &
 I(\lnot x_i) &= \prod_{\{j\ |\ \lnot x_i\in \Gamma_j\}} Z_j,\\[.4cm]
  I(y_i) &= \prod_{\{j\ |\ y_i\in \Gamma_j\}} Z_j, &
 I(\lnot y_i) &= \prod_{\{j\ |\ \lnot y_i\in \Gamma_j\}} Z_j.
 \end{align*}
From these monomials we compute a formula $C$ by
 \begin{equation}\label{eq:polynomial-from-3sat}
 C:= \prod_{i=1}^n \left(I(x_i)+I(\lnot x_i)\right)\prod_{i=1}^n \left(I(y_i)+I(\lnot y_i)\right).\end{equation}

 We fix a mapping $mon$ from the assignments of $F$ to the monomials computed by $C$: Let $\bar{\alpha}$ be an assignment to $\bar{x}$ and $\bar{\beta}$ be an assignment to $\bar{y}$. We define  $mon(\bar{\alpha}\bar{\beta})$ 
 as the monomial obtained in the expansion of $C$  by choosing the following terms. If $\alpha_i = 0$, choose $I(\lnot x_i)$, otherwise choose $I(x_i)$. Similarly, if $\beta_i = 0$, choose $I(\lnot y_i)$, otherwise choose $I(y_i)$. 

The monomial $mon(\bar{\alpha}\bar{\beta})$ has the form $\prod_{i=1}^n X_i^{\alpha_i} \prod_{j=1}^c Z_j^{\gamma_j}$, where $\gamma_j$ is the number of true literals in $\Gamma_j$ under the assignment $\bar{\alpha}\bar{\beta}$. Then $F$ is true under $\bar{\alpha}\bar{\beta}$ if and only if $mon(\bar{\alpha}\bar{\beta})$ has the factor $\prod_{j=1}^c Z_j$. 
Thus $F$ is true under $\bar{\alpha}\bar{\beta}$ if and only if  $mon(\bar{\alpha}\bar{\beta}) \prod_{j=1}^c \left(1+ Z_j + Z_j^2\right)$ has the factor $\prod_{i=1}^n X_i^{\alpha_i} \prod_{j=1}^c Z_j^3$. We set $C'= C \prod_{j=1}^c \left(1+ Z_j + Z_j^2\right)$.
 
Consider an assignment $\bar{\alpha}$ to $\bar{x}$.
 The coefficient of the monomial $\prod_{i=1}^n X_i^{\alpha_i} \prod_{j=1}^c Z_j^3$ in $C'$ is the number of assignments $\bar{\beta}$ such that $\bar{\alpha}\bar{\beta}$  satisfies $F$. Thus we get
\begin{eqnarray*}
 &&(F(\bar{x}, \bar{y}), k,\ell)\in \cc\cne\pn{3SAT}\\
&\Leftrightarrow &\text{there are at least $k$ assignments $\bar{\alpha}$  to $\bar{x}$ such that the monomial $\prod_{i=1}^n X_i^{\alpha_i} \prod_{j=1}^c Z_j^3$} \\&&\text{does not have  coefficient $\ell$ in $C'$}\\
&\Leftrightarrow & \text{there are at least $k$ assignments $\bar{\alpha}$  to $\bar{x}$ such that the monomial $\prod_{i=1}^n X_i^{\alpha_i} \prod_{j=1}^c Z_j^3$} \\&&\text{occurs in $C'' := C'- \ell\prod_{i=1}^n(1+X_i)\prod_{j=1}^c Z_j^3$}\\
&\Leftrightarrow & \text{there are at least $k$ tuples $\bar{\alpha}$ such that  $C''$ contains the monomial }   \prod_{i=1}^n X_i^{\alpha_i} \prod_{j=1}^c Z_j^3\\
&\Leftrightarrow &  C'' \text{ has at least $k$ $(\prod_{j=1}^c Z_j^3)$-extending monomials}.
\end{eqnarray*}

\end{proof}

\begin{thm}\label{thm:monSLP}
\pn{CountMon} is $\PP^\PP$-complete.
It is $\PP^\PP$-hard even for unbounded fan-in formulas of depth~$4$.
\end{thm}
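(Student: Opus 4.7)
Plan: First I would establish the upper bound $\pn{CountMon}\in\PP^\PP$. For a circuit $C$ on $n$ variables, each exponent occurring in a monomial is at most $2^{|C|}$, so monomials are indexed by exponent vectors $\bar e$ of polynomial bit-length. Deciding ``the coefficient of $X^{\bar e}$ in $C$ is nonzero'' is the complement of $\pn{ZMC}$, which lies in $\RP^\PP$ by Theorem~\ref{lem:zmcupper}. Hence ``$C$ has at least $d$ monomials'' is a $\cc$-count over an $\RP^\PP$-predicate, so $\pn{CountMon}\in\cc\RP^\PP\subseteq \PP^{\RP^\PP}\subseteq \PP^{\BPP^\PP}$, which equals $\PP^\PP$ by Lemma~\ref{lem:oraclePP}.

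For the lower bound on depth-$4$ formulas I would reduce directly from the $\PP^\PP$-complete problem $\cc\cne\pn{3SAT}$ used in the proof of Proposition~\ref{prop:toextending}. Given $(F,k,\ell)$, I reuse the formula
\[
 C'(\bar X,\bar Z):=\prod_{i=1}^n\bigl(I(x_i)+I(\lnot x_i)\bigr)\prod_{i=1}^n\bigl(I(y_i)+I(\lnot y_i)\bigr)\prod_{j=1}^c(1+Z_j+Z_j^2)
\]
already built there. Its relevant properties are that (i) its monomial support lies in $\{0,1\}^n\times\{0,\ldots,5\}^c$, (ii) all nonzero coefficients are positive integers, and (iii) the coefficient of $X^{\bar\alpha}Z^{\bar 3}$ equals $N_F(\bar\alpha)$.

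The key idea is to pad $C'$ with two correction terms so that every candidate monomial outside the $\bar\gamma=\bar 3$ slice is forced to appear, while the $\bar 3$-coefficient is shifted by $-\ell$. Concretely, I define
\[
 \Phi \;:=\; C' \;+\; \prod_{i=1}^n(1+X_i)\prod_{j=1}^c\bigl(1+Z_j+Z_j^2+Z_j^3+Z_j^4+Z_j^5\bigr) \;-\; (\ell+1)\prod_{i=1}^n(1+X_i)\prod_{j=1}^c Z_j^3.
\]
Using (iii), the coefficient of $X^{\bar\alpha}Z^{\bar 3}$ in $\Phi$ is $N_F(\bar\alpha)+1-(\ell+1)=N_F(\bar\alpha)-\ell$, which is nonzero iff $N_F(\bar\alpha)\ne\ell$; using (i)--(ii), every other monomial $X^{\bar\alpha}Z^{\bar\gamma}$ with $\bar\alpha\in\{0,1\}^n$ and $\bar\gamma\in\{0,\ldots,5\}^c\setminus\{\bar 3\}$ has coefficient at least $1$ and therefore does occur. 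Hence $\Phi$ has exactly $2^n(6^c-1)+|\{\bar\alpha:N_F(\bar\alpha)\ne\ell\}|$ monomials, and setting $d^\star:=k+2^n(6^c-1)$ (polynomially many bits) makes ``$\Phi$ has at least $d^\star$ monomials'' equivalent to $(F,k,\ell)\in\cc\cne\pn{3SAT}$.

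Finally I would check the depth. Under unbounded fan-in each factor $(1+Z_j+\cdots+Z_j^5)$ is depth~$2$, so $\prod_i(1+X_i)\prod_j(1+\cdots+Z_j^5)$ collapses into one top-level product of depth-$2$ sums, giving depth~$3$. The formula $C'$ is likewise depth~$3$ after fusing its three top-level products, and the last correction $\prod_i(1+X_i)\prod_j Z_j^3$ is depth~$2$. The outermost $+$-gate combining the three pieces yields a depth-$4$ unbounded fan-in formula of polynomial size. I expect the main conceptual obstacle to be the choice of the ``universe'' term $\prod_j(1+Z_j+\cdots+Z_j^5)$: it must cover precisely the $\bar Z$-support of $C'$ with coefficient $1$, so that the $\prod_jZ_j^3$ correction affects only the target slice and the count of ``off-target'' monomials becomes a fixed integer computable in logspace from the input.
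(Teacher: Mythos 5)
Your proof is correct and follows essentially the same route as the paper: the same upper-bound chain via a $\PP$ count over a $\coRP^\PP$ oracle collapsed by Lemma~\ref{lem:oraclePP}, and the same hardness reduction from $\cc\cne\pn{3SAT}$ using the gadget $C'$ of Proposition~\ref{prop:toextending}, padded so that all off-target monomials provably occur and the target slice is shifted by $-\ell$. Your only deviation is cosmetic but slightly cleaner: you use a single ``universe'' product $\prod_i(1+X_i)\prod_j(1+Z_j+\cdots+Z_j^5)$ covering every candidate monomial with coefficient $1$ (hence the $-(\ell+1)$ correction on the $Z^{\bar 3}$ slice), where the paper instead adds $(\ell+1)\sum_j\tilde{C_j}$ covering only the off-target monomials.
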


\begin{proof} \pn{CountMon} can be easily reduced to \pn{CountExtMon} since the number of monomials of a polynomial is the number of $1$-extending monomials. Therefore \pn{CountMon} belongs to $\PP^\PP$.

 To show hardness, 
it is enough to prove that instances of $\pn{CountExtMon}$ constructed in Proposition~\ref{prop:toextending} can be reduced to  $\pn{CountMon}$ in logarithmic space.
The idea of the proof is that we make sure that the polynomial for which we count all monomials contains all monomials that are not $m$-extending. Thus we know how many non-$m$-extending monomials it contains and we can compute the number of $m$-extending monomials from the number of all monomials. We could use the same strategy to  show in general that \pn{CountExtMon} reduces to \pn{CountMon} but by considering the instance obtained in the proof of Proposition~\ref{prop:toextending} and analyzing the extra calculations below we get hardness for unbounded fanin formulas of depth~$4$.

So let $(C'', k, m)$ be the instance of $\pn{CountExtMon}$ constructed in the proof of Proposition~\ref{prop:toextending}, with  $m=\prod_{j=1}^c Z_j^3$.   We therefore  need to count the monomials computed by $C''$ which are of the form $f(X_1, \ldots, X_n)\prod_{j=1}^c Z_j^3$. 
The circuit $C''$ is multilinear in $X$, and the $Z_j$ can only appear with powers in $\{0,1,2,3,4,5\}$. So the non-$m$-extending monomials computed by $C''$ are all products of a multilinear monomial in the $X_i$ and a monomial in the $Z_j$ where at least one $Z_j$ has a power in $\{0,1,2,4,5\}$. Fix $j$, then all monomials that are not $m$-extending because of $Z_j$ are computed by the formula
\begin{equation}\label{eq:extra-monomials}
\tilde{C_j}:= \left(\prod_{i=1}^n (X_i + 1)\right) \left(\prod_{j'\ne j} \sum_{p=0}^5 Z_{j'}^p \right) \left(1+Z_j+Z_j^2+Z_j^4+Z_j^5\right).
\end{equation}

Thus the formula $\tilde{C}:= \sum_j \tilde{C_j}$ computes all non-$m$-extending monomials that $C''$ can compute. The coefficients of monomials in $C''$ cannot be smaller than $-\ell$ where $\ell$ is part of the instance of $\cc\cne$\pn{3SAT} from which we constructed $(C'',k,m)$ before. So the formula $C^*:=C''+(\ell+1)\tilde{C}$ contains all non-$m$-extending monomials that $C''$ can compute and it contains the same extending monomials. There are $2^n6^c$ monomials of the form that $C''$ can compute, only $2^n$ of which are $m$-extending, which means that there are $2^n(6^c-1)$ monomials computed by $C^*$ that are not $m$-extending. As a consequence, $C''$ has at least $k$ $m$-extending monomials if and only if $C^*$ has at least $2^n(6^c-1)+k$ monomials.
\end{proof}

\begin{thm}\label{thm:monmonSLP}
\pn{CountMon} is $\PP^\NP$-complete
both for monotone formulas and monotone circuits.
\end{thm}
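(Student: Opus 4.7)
The plan is to combine the monotone monomial detection result of Theorem~\ref{th:zmc-monotone} with the bookkeeping trick already used in the proof of Theorem~\ref{thm:monSLP}.

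For the upper bound, a monomial $m$ occurs in a monotone polynomial iff its coefficient is nonzero, and by Theorem~\ref{th:zmc-monotone} this is an $\NP$ predicate. Since the formal degree of $C$ is at most $2^{|C|}$, each exponent fits into $|C|$ bits, so monomials admit polynomial-size encodings. On input $(C,d)$, a $\PP^\NP$ procedure nondeterministically guesses such an encoding $m$ and accepts iff the $\NP$-oracle reports that $m$ appears in $C$; the accepting branches then count exactly the distinct monomials of $C$, placing \pn{CountMon} for monotone circuits in $\PP^\NP$.

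For the lower bound I would reduce the canonical $\cc\exists\p$-complete problem $\cc\exists\pn{3SAT}$, which is $\PP^\NP$-complete via Lemma~\ref{lem:quantorsvsoracles}: given a 3CNF $F(\bar x,\bar y)$ with $|\bar x|=|\bar y|=n$ and clauses $\Gamma_1,\ldots,\Gamma_c$, and a threshold $k$, decide whether at least $k$ assignments $\bar\alpha$ to $\bar x$ admit some $\bar\beta$ with $F(\bar\alpha,\bar\beta)$ true. I would reuse the monotone formula
\[C \;=\; \prod_{i=1}^n (I(x_i)+I(\lnot x_i))\prod_{i=1}^n (I(y_i)+I(\lnot y_i))\]
built in Proposition~\ref{prop:toextending} and set $C' = C\cdot \prod_{j=1}^c (1+Z_j+Z_j^2)$. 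Exactly as in that proposition, the coefficient of $\prod_i X_i^{\alpha_i}\prod_j Z_j^3$ in $C'$ equals the number of $\bar\beta$ making $F(\bar\alpha,\bar\beta)$ true, so, since $C'$ is monotone, this monomial appears in $C'$ iff $F(\bar\alpha,\bar y)$ is satisfiable.

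The main obstacle is that \pn{CountMon} sees \emph{all} monomials of the circuit rather than only those of the shape $\prod_i X_i^{\alpha_i}\prod_j Z_j^3$, and no monotone filtering is available. I would overcome this with the padding idea from the proof of Theorem~\ref{thm:monSLP}: every monomial of $C'$ has the form $\prod_i X_i^{\alpha_i}\prod_j Z_j^{e_j}$ with $\alpha_i\in\{0,1\}$ and $e_j\in\{0,\ldots,5\}$, so I define
\[\tilde C_j \;=\; \prod_{i=1}^n (X_i+1)\prod_{j'\ne j}\Bigl(\sum_{p=0}^5 Z_{j'}^p\Bigr)\bigl(1+Z_j+Z_j^2+Z_j^4+Z_j^5\bigr),\]
set $\tilde C=\sum_{j=1}^c \tilde C_j$ and $C^* = C'+\tilde C$. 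Then $\tilde C$ already contains every monomial of the admissible shape with some $e_j\ne 3$ and no monomial with $e_j=3$ for all $j$, so the monomials of $C^*$ with all $e_j=3$ are exactly those coming from $C'$. Hence $\#\text{monomials}(C^*) = \#\{\bar\alpha: F(\bar\alpha,\bar y)\text{ satisfiable}\} + 2^n(6^c-1)$, and $(F,k)$ is a yes-instance iff $C^*$ has at least $k+2^n(6^c-1)$ monomials. Since $C^*$ is a monotone formula computable in logspace from $(F,k)$, hardness follows for both monotone formulas and monotone circuits.
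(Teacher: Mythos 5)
Your proof is correct and follows essentially the same route as the paper: the upper bound via a $\PP$ machine that guesses a monomial and queries the $\coNP$ monotone-\pn{ZMC} oracle of Theorem~\ref{th:zmc-monotone}, and hardness by reducing $\cc\exists\pn{3SAT}$ through the padded monotone formula $C^*$ assembled from the gadgets of Proposition~\ref{prop:toextending} and Theorem~\ref{thm:monSLP}. The only (cosmetic) difference is that you explicitly carry the factor $\prod_{j}(1+Z_j+Z_j^2)$, i.e.\ you use $C'$ rather than $C$ — which is indeed what is needed for the all-$Z_j^3$ monomials to correspond to the satisfiable $\bar{\alpha}$ — whereas the paper's text writes $C^*=C+\sum_j \tilde{C_j}$ but clearly intends the same construction.
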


\begin{proof}
We first show hardness for monotone formulas.
The argument is very similar to the proof of Theorem~\ref{thm:monSLP}. 
Consider the following canonical $\cc\exists\p$-complete problem $\cc\exists\pn{3SAT}$.

\problem{$\cc\exists$3SAT}{$\pn{3SAT}$-formula $F(\bar{x}, \bar{y})$, $k \in \mathbb{N}$.}
{Decide if there are at least $k$ assignments $\bar{\alpha}$ to $\bar{x}$ such that
$F(\bar{\alpha}, \bar{y})$ is satisfiable.}

We reduce $\cc\exists\pn{3SAT}$ to \pn{CountMon}. With Lemma~\ref{lem:quantorsvsoracles}
the hardness for $\PP^\NP$ follows.
Consider a $\pn{3SAT}$-formula $F(\bar{x}, \bar{y})$.
Let $n=|\bar{x}|=|\bar{y}|$ and let $c$ be the number of clauses of $F$.
Define the polynomial
$C^* = C+\sum_{j=1}^c \tilde{C_j}$
where $C$ is defined by Equation~\ref{eq:polynomial-from-3sat} and $\tilde{C_j}$
by Equation~\ref{eq:extra-monomials}.
The analysis is similar to the proof of Theorem~\ref{thm:monSLP}.
The polynomial $C^*$ is computed by a monotone arithmetic formula
and has at least $2^n (6^c-1)+k$ monomials if and only if $(F,k)$ is a positive
instance of $\cc\exists\pn{3SAT}$.

We now prove the upper bound. Recall that $\pn{CountMon} \in \PP^{\pn{ZMC}}$.
From Theorem~\ref{th:zmc-monotone}, 
it follows that $\pn{CountMon}$ on monotone circuits belongs to $\PP^{\NP}$.
\end{proof}

\section{Multilinearity}\label{sec:ml}

In this section we consider the effect of multilinearity on our problems. We will not consider promise problems and therefore  the multilinear variants of our problems must first check if the computed polynomial is multilinear. We start by showing that this step is not difficult.

\problem{CheckML}{Arithmetic circuit $C$.}{Decide if the polynomial computed by $C$ is multilinear.}

\begin{prop}\label{prop:checkml} \pn{CheckML} is equivalent to \pn{ACIT}.
\end{prop}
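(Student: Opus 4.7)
The plan is to prove the two reductions separately, both under logspace many-one reductions (the default in the paper).

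For $\pn{ACIT} \leq \pn{CheckML}$, given a circuit $C$ of \pn{ACIT} I would build the circuit $C' := Y \cdot Y \cdot C$ where $Y$ is a fresh variable. This adds only a constant number of gates to $C$, so the construction is logspace-computable. If $C \equiv 0$ then $C' \equiv 0$ is trivially multilinear; conversely, if $C \not\equiv 0$ then any nonzero monomial $m$ of $C$ produces a monomial $Y^2 m$ in $C'$ of degree $2$ in $Y$, so $C'$ is not multilinear. Hence $C'$ is multilinear iff $C \equiv 0$.

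For $\pn{CheckML} \leq \pn{ACIT}$, I would use the standard characterization that $C \in \mathbb{Z}[X_1, \ldots, X_n]$ is multilinear iff it has degree at most $1$ in each $X_i$. Writing $C = A_i + X_i B_i + X_i^2 R_i$ with $A_i, B_i, R_i$ independent of $X_i$, one sees that $C$ has degree at most $1$ in $X_i$ iff
\[ D_i := C - C|_{X_i = 0} - X_i \bigl( C|_{X_i = 1} - C|_{X_i = 0} \bigr) \equiv 0. \]
Introducing fresh variables $Y_1, \ldots, Y_n$, define $D := \sum_{i=1}^n Y_i D_i$; then $D \equiv 0$ iff each $D_i$ vanishes identically iff $C$ is multilinear, so $D$ is the desired \pn{ACIT} instance. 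A circuit for $D$ is built from $C$ by taking, for each $i$, three copies of $C$ --- one untouched and two with every input gate labeled $X_i$ replaced by a small subcircuit computing the constant $0$ or $1$ (both expressible from $-1$ via $1 = (-1)\cdot(-1)$ and $0 = (-1) + 1$) --- and wiring them together with the fresh $Y_i$ and $O(n)$ additional gates. This transformation is local and syntactic, hence logspace-computable.

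The main technical point is simply the equivalence between multilinearity and the simultaneous vanishing of the $D_i$, which follows immediately from grouping $C$ by powers of $X_i$. The rest of the argument is syntactic bookkeeping, so no serious obstacle should arise.
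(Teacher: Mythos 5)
Your proof is correct. The first direction is essentially the paper's (the paper multiplies by $X^2$ for an arbitrary variable rather than a fresh $Y^2$; both work). The second direction, however, takes a genuinely different route: the paper builds, for each $i$, a syntactic circuit $C_i$ for the formal second derivative $\partial^2 C/\partial X_i^2$ (tripling each gate and using the Leibniz rule at $\times$-gates) and tests $\sum_i Y_i C_i \equiv 0$, whereas you use substitution, testing $\sum_i Y_i D_i \equiv 0$ with $D_i = C - C|_{X_i=0} - X_i(C|_{X_i=1}-C|_{X_i=0})$. Writing $C=\sum_k X_i^k P_k$, your $D_i$ equals $\sum_{k\ge 2}(X_i^k - X_i)P_k$, whose coefficient of $X_i^k$ for $k\ge 2$ is exactly $P_k$, so $D_i\equiv 0$ iff $\deg_{X_i} C\le 1$, as you claim; both constructions yield circuits of size $O(n\,|C|)$ and are logspace. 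Your version is arguably more elementary (no derivative gadget to verify) and, as a side benefit, is characteristic-free, while the paper's second-derivative test implicitly relies on characteristic $0$ (in characteristic $2$, for instance, $\partial^2 X^2/\partial X^2 = 2 = 0$); over $\mathbb{Z}$, where the paper works, this makes no difference.
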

\begin{proof} Reducing \pn{ACIT} to \pn{CheckML} is easy: Simply multiply the input with $X^2$ for an arbitrary variable $X$. The
 resulting circuit is multilinear if and only if the original circuit was $0$.

For the other direction the idea is to compute the second derivatives of the polynomial computed by the input circuit and check if they are $0$.

So let $C$ be a circuit in the variables $X_1, \ldots, X_n$ that is to be checked for multilinearity. For each $i$ we inductively compute a circuit $C_i$ that computes the second derivative with respect to $X_i$. To do so for each gate $v$ in $C$ the circuit $C'$ has three gates $v_i$, $v_i'$ and $v_i''$. 
The polynomial in $v_i$ is that of $v$, $v_i'$ computes the first derivative and $v_i''$ the second. For the input gates the construction is obvious. If $v$ is a $+$-gate with children $u$ and $w$ we have $v_i= u_i+w_i$, $v_i'= u_i'+w_i'$ and $v_i''= u_i''+w_i''$. If $v$ is a $\times$-gate with children $u$ and $w$ we have $v_i= u_i w_i$, $v_i'= u_i' w_i +u_i w_i'$ and $v_i''= u_i'' w_i + 2 u_i' w_i'+ u_i w_i''$. It is easy to see that the constructed circuit computes indeed the second derivative with respect to $X_i$.

Next we compute $C':= \sum_{i=1}^n Y_i C_i$ for new variables $Y_i$. We have that $C'$ is identically zero if and only if $C$ is multilinear. Also $C'$ can easily be constructed in logarithmic space.
\end{proof}

Next we show that the problem gets much harder if, instead of asking whether \emph{all} the monomials in the polynomial  computed by a circuit are multilinear,  we ask whether at least \emph{one} of the monomials is multilinear. 

\problem{MonML}{Arithmetic circuit $C$.}{Decide if the polynomial computed by $C$ contains a multilinear monomial.}

The problem \pn{monML} lies at the heart of fast exact algorithms for deciding $k$-paths by Koutis and Williams~\cite{koutis08,williams09} (although in these papers the polynomials are in characteristic $2$ which changes the problem a little). This motivated Chen and Fu \cite{CF10, CF11} to consider \pn{monML}, show that it is $\sP$-hard and give algorithms for the bounded depth version.
We provide further information on the complexity of this problem.

\begin{prop}\label{prop:monml}
\pn{MonML} is in $\RP^\PP$. It is  $\cne\p$-complete for multiplicatively disjoint circuits.\end{prop}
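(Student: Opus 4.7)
The plan is to handle the two parts of the statement separately, in each case adapting scaffolding that already appears in the paper. For $\pn{MonML} \in \RP^\PP$ I would mirror the approach of Proposition~\ref{prop:existextending}: build an auxiliary polynomial in fresh variables $Y_1,\ldots,Y_n$ whose non-vanishing is equivalent to $C$ having a multilinear monomial, then test non-vanishing by Schwartz-Zippel with \pn{CoeffSLP} as a $\PP$-oracle. With $d := 2^{|C|}$ upper bounding the per-variable degree of $C$, I would set
\[
Q := C(X_1,\ldots,X_n) \cdot \prod_{i=1}^n \bigl(X_i^{d} + Y_i X_i^{d-1}\bigr),
\]
building the $X_i$-powers by iterated squaring so that $Q$ has polynomial size. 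A quick exponent check shows that the coefficient of the corner monomial $X_1^d \cdots X_n^d$ in $Q$ equals $\sum_{a\in\{0,1\}^n}([X^a]C)\prod_i Y_i^{a_i}$: in each factor, the only pairs $(a_i,b_i)$ with $b_i\in\{0,1\}$ and total exponent $d$ are $(0,0)$ and $(1,1)$. The resulting polynomial in $Y$ has degree at most $n$, so plugging in random small integers for $Y$ and asking \pn{CoeffSLP} for its value modulo a random prime (as in the proof of Theorem~\ref{lem:zmcupper}) yields a one-sided error $\RP^\PP$ algorithm.

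For the $\cne\p$ upper bound on multiplicatively disjoint circuits, I would re-use the schema of Proposition~\ref{prop:existextending}: nondeterministically guess a vector $a\in\{0,1\}^n$ encoding a candidate multilinear monomial $m_a$, then invoke the complement of \pn{ZMC} on $(C,m_a)$. The latter is in $\cne\p$ by Theorem~\ref{th:zmccep}, so $\pn{MonML} \in \exists\cne\p$, which collapses to $\cne\p$ by Lemma~\ref{lem:green}.

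For the matching hardness I would reduce $\pn{per}_{\ne}$, replaying the Valiant-style construction used in Theorem~\ref{th:zmccep}. Given $(A,d)$, form the multiplicatively disjoint formula $Q := \prod_{i=1}^n \sum_{j=1}^n a_{ij} Y_j$. Each monomial of $Q$ has the shape $\prod_i Y_{\sigma(i)}$ for some $\sigma:[n]\to[n]$, and is multilinear exactly when $\sigma$ is a permutation, in which case the monomial is $Y_1\cdots Y_n$ with coefficient $\pn{per}(A)$. Consequently $Q - d\cdot Y_1\cdots Y_n$ is still multiplicatively disjoint and has a multilinear monomial iff $\pn{per}(A)\ne d$. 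The main technical point I expect to require care is the selector design in the $\RP^\PP$ step: the per-factor exponent arithmetic has to isolate exactly $a_i\in\{0,1\}$ when one reads off a single corner monomial, while keeping the overall circuit polynomial-size via iterated squaring; once that is in place, the rest is routine Schwartz-Zippel plus permanent machinery.
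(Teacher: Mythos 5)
Your proposal is correct and follows essentially the same route as the paper: an auxiliary product in fresh variables $Y_i$ whose corner coefficient encodes the multilinear part of $C$, tested via Schwartz--Zippel with a \pn{CoeffSLP} oracle; guess-a-monomial plus the complement of \pn{ZMC} with Lemma~\ref{lem:green} for the $\cne\p$ upper bound; and the $\pn{per}_{\ne}$ reduction for hardness. The only difference is cosmetic: the paper uses the simpler selector $\prod_{i=1}^n(1+X_iY_i)$ and reads off the coefficient of $\prod_i X_i$, which avoids the $X_i^d$ powers and the iterated-squaring bookkeeping entirely.
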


\begin{proof}
For the first upper bound, let $C$ be the input in variables $X_1,\ldots, X_n$. We set $C'= \prod_{i=1}^n (1+ X_i Y_i)$.
Then $C$ computes a multilinear monomial if and only if in the product $CC'$ the coefficient polynomial $P(Y_1, \ldots, Y_n)$ of $\prod_{i=1}^n X_i$ is not identically $0$. This can be tested as in the proof of Proposition~\ref{prop:existextending},  thus establishing $\pn{MonML}\in \RP^\PP$.
 
The $\cne\p$-completeness in the multiplicatively disjoint case can be proved in the same way as in Proposition~\ref{prop:existextending}.
\end{proof}

We now turn to our first problem, namely deciding whether a monomial appears in the polynomial computed by a circuit, in the multilinear setting.

\problem{ML-ZMC}{Arithmetic circuit $C$, monomial $m$.}{Decide if $C$ computes a multilinear polynomial in which the monomial $m$ has  coefficient $0$.}

\begin{prop}\label{prop:zmcml} \pn{ML-ZMC} is equivalent to \pn{ACIT}.
\end{prop}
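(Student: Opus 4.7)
The plan is to prove $\pn{ML-ZMC} \equiv \pn{ACIT}$ by exhibiting logspace many-one reductions in both directions. The reduction $\pn{ACIT} \le \pn{ML-ZMC}$ is direct: I would map a circuit $C$ to $(Y^2 \cdot C, 1)$ with $Y$ a fresh variable and $m=1$ the empty monomial. If $C \equiv 0$, the resulting polynomial is $0$, which is multilinear with constant coefficient $0$; otherwise every monomial of $Y^2 C$ carries the factor $Y^2$, so the polynomial is not multilinear and the instance is negative.

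For the converse $\pn{ML-ZMC} \le \pn{ACIT}$, I would build a single circuit $D$ whose vanishing encodes both ``$C$ is multilinear'' and ``the coefficient $c_m$ of $m$ in $C$ is $0$''. Proposition~\ref{prop:checkml} yields a circuit $D_1$, constructible in logspace, with $D_1 \equiv 0$ iff $C$ is multilinear. If $m$ has some variable of exponent $\ge 2$, then no multilinear polynomial contains $m$, and $D := D_1$ already suffices. Otherwise $m = \prod_{i \in S} X_i$, and I would construct a second circuit $D_2$ such that, whenever $C$ is multilinear, $D_2 \equiv 0$ iff $c_m = 0$. Combining them as $D := D_1 + Z \cdot D_2$ for a fresh variable $Z$ not occurring in $D_1$ or $D_2$ gives $D \equiv 0$ iff both $D_1 \equiv 0$ and $D_2 \equiv 0$, which is exactly the \pn{ML-ZMC} condition.

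The main obstacle is constructing $D_2$; the idea is to reduce coefficient extraction to reading off the leading coefficient of a univariate polynomial. For each $k \in \{0,1,\ldots,|S|\}$, let $C_k$ denote the circuit obtained from $C$ by substituting $X_i \mapsto k$ for $i \in S$ and $X_j \mapsto 0$ for $j \notin S$. When $C = \sum_T c_T \prod_{i \in T} X_i$ is multilinear, $C_k$ computes the value $p(k)$ of the univariate polynomial $p(X) = \sum_{T \subseteq S} c_T X^{|T|}$, which has degree at most $|S|$ and leading coefficient exactly $c_m$. The standard finite-difference identity then gives
\[
|S|! \cdot c_m \;=\; \sum_{k=0}^{|S|} (-1)^{|S|-k} \binom{|S|}{k} \, p(k),
\]
so I would define
\[
D_2 \;:=\; \sum_{k=0}^{|S|} (-1)^{|S|-k} \binom{|S|}{k} \, C_k.
\]
The integer constants $k$ and $\binom{|S|}{k} \le 2^{|S|}$ can each be produced by $O(|S|)$-size subcircuits over $\{-1\}$ via repeated doubling, so $D_2$ has polynomial size and can be written down in logspace. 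Since $C$ being multilinear forces $D_2$ to compute exactly $|S|! \cdot c_m$, one obtains $D_2 \equiv 0 \Leftrightarrow c_m = 0$, and the circuit $D := D_1 + Z \cdot D_2$ completes the reduction.
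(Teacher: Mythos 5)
Your proof is correct, and while its overall architecture matches the paper's --- both directions reduce to/from identity testing, and the backward reduction combines a multilinearity-testing circuit with a coefficient-extracting circuit via a fresh variable $Z$ so that the sum vanishes iff both parts do --- the key gadget differs. For $\pn{ACIT} \le \pn{ML-ZMC}$ the paper invokes the reduction of Allender et al.\ to constant (variable-free) circuits, whereas you reuse the $Y^2$-multiplication trick from Proposition~\ref{prop:checkml}; yours is arguably more self-contained, and your handling of a non-multilinear query monomial $m$ (output the $\pn{CheckML}$ circuit, since the condition degenerates to ``$C$ is multilinear'') is more faithful to the problem statement than the paper's, which declares such instances negative. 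For the converse, the paper extracts the coefficient of $m$ by computing the homogeneous component of degree $\deg(m)$ (the classical construction of~\cite[Lemma 2.14]{bur00}) and then substituting $1$ on the variables of $m$ and $0$ elsewhere; you instead substitute $X_i \mapsto k$ on $S$ and $0$ elsewhere and take an $|S|$-th finite difference, recovering $|S|!\cdot c_m$. Both extractions are only valid under the promise that $C$ is multilinear, which both proofs repair identically via the $+\,Z\cdot(\text{multilinearity test})$ combination. The one point you should not wave away is producing the binary expansions of the constants $\binom{|S|}{k}$ in logarithmic space: this relies on the nontrivial fact that iterated multiplication and division are logspace-computable. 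You can sidestep it entirely by clearing denominators, i.e.\ multiplying the finite-difference identity by $\prod_{k'=0}^{|S|} k'!\,(|S|-k')!$ so that each coefficient becomes a product of explicitly listed small integers realizable by a constant subcircuit, or by adopting the paper's homogeneous-component route, which involves no large constants at all.
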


\begin{proof}
 We first show that \pn{ACIT} reduces to \pn{ML-ZMC}. So let $C$ be an input for \pn{ACIT}. Allender et al.~\cite{allenderBKM09} have shown that \pn{ACIT} reduces to a restricted version of \pn{ACIT} in which all inputs are $-1$ and thus the circuit computes a constant. Let $C_1$ be the result of this reduction. 
Then $C$ computes identically $0$ if and only if the constant coefficient of $C_1$ is $0$. This establishes the first direction.

For the other direction let $(C,m)$ be the input, where $C$ is an arithmetic circuit and $m$ is a monomial. 
First check if $m$ is multilinear, if not output $1$ or any other nonzero polynomial. 
Next we construct a circuit $C_1$ that computes the homogeneous component of degree $\deg(m)$ of $C$ with the classical method  (see for example~\cite[Lemma 2.14]{bur00}).
Observe that if $C$ computes a multilinear polynomial, so does $C_1$. We now plug in $1$ for the variables that appear in $m$ and $0$ for all other variables, call the resulting (constant) circuit $C_2$. If $C_1$ computes a multilinear polynomial, then $C_2$ is zero if and only if $m$ has coefficient $0$ in $C_1$. The end result of the reduction is $C^*:=   C_2+Z C_3$ where $Z$ is a new variable and $C_3$ is a circuit which is identically $0$ iff $C$ computes a multilinear polynomial (obtained via Proposition~\ref{prop:checkml}).
$C$ computes a multilinear polynomial and does not contain the monomial $m$ if and only if both  $C_2$ and $Z C_3$   are identically $0$, which happens if and only if their sum is identically $0$.
\end{proof}

In the case of our second problem, counting the number of monomials, the complexity falls to $\mathsf{PP}$.

\problem{ML-CountMon}{Arithmetic circuit $C$, $d\in \mathbb{N}$.}{Decide if the polynomial computed by $C$ is multilinear and has at least $d$ monomials.}

\begin{prop}\label{prop:monslpml} \pn{ML-CountMon} is $\mathsf{PP}$-complete (for Turing reductions).\end{prop}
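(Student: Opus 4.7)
The upper bound will follow by combining the multilinear versions of coefficient testing and multilinearity testing (both in $\coRP$) with the closure $\PP^\BPP = \PP$ from Lemma~\ref{lem:oraclePP}. Concretely, by Proposition~\ref{prop:checkml}, $\pn{CheckML}$ is equivalent to $\pn{ACIT}$ and hence lies in $\coRP$; by Proposition~\ref{prop:zmcml}, $\pn{ML-ZMC}$ is also equivalent to $\pn{ACIT}$ and lies in $\coRP$. Consequently, the language ``on input $(C, S)$, is $C$ multilinear and is the coefficient of $\prod_{i \in S} X_i$ in $C$ nonzero?'' lies in $\BPP$. I would then define a $\PP^\BPP$ machine for $\pn{ML-CountMon}$ that, on input $(C, d)$, nondeterministically guesses $S \subseteq \{1, \ldots, n\}$, queries this $\BPP$ oracle, and accepts iff the answer is YES. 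The number of accepting paths is then the number of monomials of $C$ when $C$ is multilinear, and $0$ otherwise. A standard PP padding (a fresh initial bit whose second branch contributes $2^n - d + 1$ accepting and $d - 1$ rejecting paths deterministically) converts this into a PP threshold test for ``$C$ has at least $d$ monomials'', so Lemma~\ref{lem:oraclePP} gives $\pn{ML-CountMon} \in \PP^\BPP = \PP$.

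For PP-hardness I would reduce, in logarithmic space, from the canonical PP-complete problem $\pn{Maj-DNF}$: given a DNF $F = T_1 \vee \cdots \vee T_s$ in $n$ variables and $k \in \mathbb{N}$, decide whether $F$ has at least $k$ satisfying assignments. For each term $T_t$, define the multilinear formula
\[ C_t := \prod_{j\,:\, x_j \in T_t} X_j \,\cdot \prod_{j \notin \mathrm{vars}(T_t)} (1 + X_j), \]
whose expansion equals $\sum_{\alpha \,:\, \alpha \text{ satisfies } T_t} \prod_{i\,:\, \alpha_i = 1} X_i$, the generating polynomial of the assignments satisfying $T_t$. Setting $C := \sum_t C_t$ gives a multilinear arithmetic formula of polynomial size. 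Since every $C_t$ has nonnegative coefficients there is no cancellation in the sum, and the coefficient of the monomial $\prod_{i\,:\,\alpha_i = 1} X_i$ in $C$ equals the number of terms of $F$ satisfied by $\alpha$, which is positive precisely when $\alpha$ satisfies $F$. Therefore $C$ has exactly as many monomials as $F$ has satisfying assignments, and $(F, k) \mapsto (C, k)$ is the required reduction.

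The main obstacle is the upper bound: it is crucial that both $\pn{CheckML}$ and $\pn{ML-ZMC}$ drop all the way to $\coRP$ in the multilinear setting, rather than landing only in $\coRP^\PP$ as in the general case. This drop, which is precisely the content of Propositions~\ref{prop:checkml} and~\ref{prop:zmcml}, is what allows the outer $\PP$ counting machine to absorb the oracle queries via $\PP^\BPP = \PP$, explaining why the complexity falls from $\PP^\PP$ in Theorem~\ref{thm:monSLP} to plain $\PP$ here.
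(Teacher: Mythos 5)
Your upper bound is essentially the paper's: check multilinearity via \pn{CheckML}, count monomials with a $\PP$ machine using \pn{ML-ZMC}-type queries, and absorb the randomized oracle via Lemma~\ref{lem:oraclePP}; your version just spells out the padding. The hardness half, however, is genuinely different and it works. The paper reduces from the $\{0,1\}$-permanent: replace the entries of $A$ by distinct variables $a_{ij}X_{ij}$, observe that $\pn{per}(B)$ has $\pn{per}(A)$ monomials with no cancellation, note that $\det(B)$ then has the same support, and feed a small determinant circuit to \pn{ML-CountMon}. Since the $\{0,1\}$-permanent is $\sP$-complete only under Turing reductions, this yields hardness only for Turing reductions --- which is exactly why the proposition carries that qualifier. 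Your reduction instead encodes the satisfying assignments of a DNF term $T_t$ as the support of the multilinear monotone formula $\prod_{x_j\in T_t}X_j\prod_{j\notin\mathrm{vars}(T_t)}(1+X_j)$ and sums over terms; monotonicity rules out cancellation, so the number of monomials of $C$ equals $\#F$, and $(F,k)\mapsto(C,k)$ is a logspace \emph{many-one} reduction from the threshold-DNF-counting problem (which is indeed $\PP$-complete under many-one reductions, since $\PP$ is closed under complement and $\#F = 2^n - \#(\lnot F)$). So your argument is arguably cleaner, avoids the determinant circuit entirely, and in fact strengthens the statement: it shows $\PP$-hardness under logspace many-one reductions, so the ``(for Turing reductions)'' caveat could be dropped. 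Two small points to tidy up: the conjunction ``$C$ multilinear and the coefficient of $\prod_{i\in S}X_i$ nonzero'' is the intersection of a $\coRP$ language and an $\RP$ language, hence in $\BPP$ but not one-sided, so you do need the full $\PP^{\BPP}=\PP$ closure as you invoke it; and your counting machine mishandles the degenerate case $d=0$ with a non-multilinear $C$ (it would accept although \pn{ML-CountMon} requires multilinearity), which is trivially patched by treating $d\le 1$ separately.
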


\begin{proof}
We first show $\pn{ML-CountMon}\in \mathsf{PP}$. To do so we  use $\pn{CheckML}$ to check that the polynomial computed by $C$ is multilinear. Then  counting monomials can  be done in $\PP^\pn{ML-ZMC}$, and  $\pn{ML-ZMC}$ is in $\coRP$. By Lemma \ref{lem:oraclePP} the  class $\PP^{\coRP}$ is simply $\PP$.

For hardness we reduce the computation of the  $\{0,1\}$-permanent to \pn{ML-CountMon}. The proposition follows, because 
the  $\{0,1\}$-permanent is $\sP$-complete for Turing reductions.
So let $A$ be a $0$-$1$-matrix and $d\in \mathbb{N}$ and we have to decide if $\pn{per}(A)\ge d$. We get a matrix $B$ from $A$ by setting $b_{ij} := a_{ij} X_{ij}$. Because every entry of $B$ is either $0$ or a distinct variable,
we have that, when we compute the permanent of $B$, every permutation that yields a non-zero summand yields a unique monomial. This means that there are no cancellations, so that $\pn{per}(A)$ is the number of monomials in $\pn{per}(B)$. 

The problem is now  that no small circuits for the permanent are known and thus $\pn{per}(B)$ is not a good input for $\pn{ML-CountMon}$. But because there are no cancellations, we have that $\pn{det}(B)$ and $\pn{per}(B)$ have the same number of monomials. So take a small circuit for the determinant (for instance the one given in~\cite{mahajanV97}) and substitute its inputs by the entries of $B$. The result is a circuit $C$ which computes a polynomial whose number of monomials is $\pn{per}(A)$. Observing that the determinant, and thus the polynomial computed by $C$, is multilinear completes the proof.
\end{proof}

\subparagraph*{Acknowledgements}

We would like to thank Sylvain Perifel for helpful discussions. The results of this paper were conceived while the third author was visiting the Équipe de Logique Mathématique at Université Paris Diderot Paris 7. He would like to thank Arnaud Durand for making this stay possible, thanks to funding from ANR ENUM (ANR-07-BLAN-0327). The third author would also like to thank his supervisor Peter Bürgisser for helpful advice.

\bibliographystyle{plain}
\bibliography{literature,crossref}

\appendix
 
\section{Parse trees and coefficients}\label{app:parsetrees}
  
Define inductively the parse trees of a circuit $C$ in the following manner:
\begin{enumerate}
\item the only parse tree of an input gate is the gate itself,
\item the  parse trees of an addition gate \lal\ with argument gates \lbe\ and \lga\ are obtained by taking either  a parse
tree of \lbe\ and adding the edge from \lbe\ to \lal\ or by taking a parse tree of \lga\ and adding the edge from \lga\ to \lal,
\item the  parse trees of a multiplication gate \lal\ with argument gates \lbe\ and \lga\ are obtained by taking  a parse
tree of \lbe\ and a parse tree of \lga\ and adding the edge from \lbe\ to \lal\  and  the edge from \lga\ to \lal, renaming vertices so that the chosen parse trees of \lbe\ and \lga\ are disjoint.
\end{enumerate}
The value of a parse tree is defined as the product of the labels of each input gate in the parse tree (note that in the parse tree there may be several copies of
a given input gate of the circuit, so that the corresponding label will have as power the number of copies of the gate).
It is easy to see that the polynomial computed by a circuit is the sum of the values of its parse trees:
$$C(\bar{x})=\sum_{T\text{ parse tree of }C} \text{value}(T).$$
 
In the case of a multiplicatively disjoint circuit,  any parse tree is a subgraph of the circuit. 
In this case,  a parse tree can be equivalently seen as a subgraph  defined by a subset of $T$ of the edges satisfying the following properties:
\begin{enumerate}
\item it contains the output gate,
\item for any addition gate \lal, if $T$ contains an edge with origin \lal, then $T$ contains exactly one edge with destination \lal,
\item for any multiplication gate \lal, if $T$ contains an edge with origin \lal, then $T$ contains both (all) edges with destination \lal,
\item for any gate \lal, if $T$ contains an edge with destination \lal, then $T$ contains an edge with origin \lal.
\end{enumerate}
 
In the case of an arbitrary circuit, the set of parse trees of a circuit $C$ is equal to the set of parse trees of the associated arithmetic formula (or of the associated multiplicatively disjoint circuit). But a parse tree of a circuit $C$ need not be a subgraph, in particular the size of a parse tree may be exponential in the size of the circuit, because of the duplication involved in the definition of parse trees. This means that for a given edge $e$ in the circuit $C$, there may be several copies of $e$ in a parse tree. We will call the number of copies the \emph{multiplicity} of $e$.
 
The type \ltau\ of  a parse tree $T$ for a circuit $C$ is the function which associates to each edge $e$ of $C$ its multiplicity $\ltau(e)$ in $T$.
One could give an inductive characterization of parse tree types similar to the definition of parse trees above, but the following characterization is more useful. Add a unique artificial edge $e_{\text{out}}$ whose origin is the output gate of the circuit.
A parse tree type \ltau\ must satisfy the  properties below:
\begin{enumerate}
\item the multiplicity of $e_{\text{out}}$ is $1$,
\item for any addition gate \lal\ with arguments \lbe\ and \lga, the sum of multiplicities of the edges with origin \lal\ is equal to the sum of the multiplicity  of $(\lbe,\lal)$ and the multiplicity of $(\lga,\lal)$,
\item for any multiplication gate \lal\ with arguments \lbe\ and \lga, the sum of multiplicities of the edges with origin \lal\ is equal both to the  multiplicity  of $(\lbe,\lal)$ and to  the multiplicity of $(\lga,\lal)$,
\item for any gate \lal, if the multiplicity of an edge with destination \lal\ is strictly positive, then there must be an edge with origin \lal\ whose multiplicity is strictly positive.
\end{enumerate}
Note that if the circuit is multiplicatively disjoint, since all parse trees are subgraphs of the circuit, the multiplicity of an edge in a parse tree is always at most $1$. In this case a parse tree type contains exactly one parse tree and we could identify parse tree types and parse trees, the characterization above becomes identical to the one given for parse trees.
 
Define the value of a parse tree type as the product, for all input gate \lal, of the label of \lal\ raised to a power equal to the sum of the multiplicities of edges with origin \lal.
The sum of the values of the  parse trees can now be partitioned by parse tree types to express the value computed by $C$:
$$C(\bar{x})=
\sum_{\ltau \text{ a type for } C} 
\abs{\set{T \text{ parse tree of type } \ltau}}\cdot \text{value}(\ltau).$$
 
\end{document}